\newcommand{\ketbra}[2]{\ket{#1}\!\bra{#2}}
\theoremstyle{definition}
\newtheorem{lemma}{Lemma}
\definecolor{mygrey}{gray}{0.35}
\definecolor{myblue}{rgb}{0.2,0.2,0.8}
\definecolor{myzard}{cmyk}{0,0,0.05,0}
\definecolor{mywhite}{rgb}{1,1,1}
\definecolor{myred}{rgb}{0.9,0.1,0.}
\begin{document}

\title{Coherence and non-classicality of quantum Markov processes}

\author{Andrea Smirne}
\affiliation{Institute of Theoretical Physics, Universit{\"a}t Ulm, Albert-Einstein-Allee 11D-89069 Ulm, Germany}

\author{Dario Egloff}
\affiliation{Institute of Theoretical Physics, Universit{\"a}t Ulm, Albert-Einstein-Allee 11D-89069 Ulm, Germany}

\author{Mar{\'i}a Garc{\'i}a D{\'i}az}
\affiliation{Departament de F\'{\i}sica, Grup d'Informaci\'{o} Qu\`{a}ntica, 
Universitat Aut\`{o}noma de Barcelona, ES-08193 Bellaterra (Barcelona), Spain}

\author{Martin B. Plenio}
\affiliation{Institute of Theoretical Physics, Universit{\"a}t Ulm, Albert-Einstein-Allee 11D-89069 Ulm, Germany}

\author{Susana F. Huelga}
\affiliation{Institute of Theoretical Physics, Universit{\"a}t Ulm, Albert-Einstein-Allee 11D-89069 Ulm, Germany}

\begin{abstract}
Although quantum coherence is a basic trait of quantum mechanics, 
the presence of coherences in the quantum description of a certain phenomenon
does not rule out the possibility to give an alternative description of the same phenomenon in purely classical terms.
Here, we give definite criteria to determine when and to what extent quantum coherence is 
equivalent to non-classicality. We prove that a Markovian multi-time statistics
obtained from repeated measurements of a non-degenerate observable cannot be traced back to a classical
statistics if and only if  
the dynamics is able to generate coherences and to subsequently turn them
into populations.
Furthermore, we show with simple examples that such connection between quantum coherence and non-classicality
is generally absent if the statistics is non-Markovian.
\end{abstract}

\maketitle

\section*{Introduction}
The distinction between the classical and the quantum description of 
physical systems has been a central issue from the birth of quantum theory itself \cite{Einstein1935,Schroedinger1935}.
The coherent superposition of states, as well as entanglement are quantum features implying an essentially non-classical
statistics, when proper measurement procedures are devised \cite{Bell1987}, typically involving
the measurement of non-local or non commuting observables. In addition, the non-classical features of a quantum system
can be singled out by means of sequential measurements of one and the same local observable at different times \cite{Leggett1985,Leggett1988,Huelga1995,Lambert2010,Palacios2011,Waldherr2011,Knee2012,Li2012,Kofler2013,Emary2014,Zhou2015,Clemente2016,Friedenberger2017}. 
The evolution of the system between the measurements generally makes the multi-time
statistics highly non-trivial, and a central goal is to relate non-classicality
to easily accessible quantities with a clear physical meaning.

Quantum coherence is a resource, which allows to attain several tasks not achievable 
without it.
Such a basic trait of quantum mechanics has been recently formulated in terms of a resource theory \cite{gour_refframes2008,aaberg2014catalytic,Baumgratz2014,Killoran2016,Yadin2016,Streltsov2017,Liu2017,winter2016operational,chitambar2016critical}.
Within the context of resource theory, classicality is encoded into the notions of incoherent states 
and operations: once a reference basis is fixed, the action of an incoherent operation on an incoherent state
is equivalent to the result of a classical operation. 
At a more practical level, the presence of coherences
in the evolution of a system is often taken in itself as a witness of non-classicality. 
Think, for example, of the intense debate about the possible role of quantum coherence to enhance the efficiency 
of certain biological processes \cite{Engel2007,Ishizaki2009,Chin2013,Huelga2013}. 
The evidence of a coherent coupling between the sites of a molecular complex
certainly challenges the simple classical models based on incoherent transitions among the sites, but it does not rule out the possibility
to explain the observed data via
more elaborate classical descriptions. More in general, the occurrence of coherences in the 
quantum description of a certain phenomenon does not prove by itself 
its non-classical nature \cite{Wilde2010,Briggs2011,Miller2012,Montiel2013,Reilly2014}.

In this paper, we take some relevant steps towards a rigorous link 
between quantum coherence and the non-classicality of multi-time statistics,
identifying proper conditions under  
which such connection can be established unambiguously.
Starting from the quantum description of a system, we exploit a general property of classical stochastic processes,
namely the fulfillment of the Kolmogorov conditions \cite{Feller1971,Breuer2002}, 
to discriminate the multi-time statistics due to repeated projective measurements of one observable from the statistics of a classical process.
This allows us to determine in a precise way when the generation and detection of quantum coherences ``irrevocably excludes" \cite{Wilde2010} alternative, classical explanations.

In particular, we identify the key property of quantum coherences in this context,
and we prove that it is in one-to-one correspondence with the non-classicality of the multi-time statistics,
under the assumption that the latter is Markovian, i.e., that it satisfies the quantum regression theorem
\cite{Lax1968,vanKampen1992,Carmichael1993,Breuer2002,Gardiner2004}.
As a further consequence of our analysis, we illustrate how and to what extent non-classicality can be related with easily detectable quantities,
which can be accessed by carefully preparing the system at the initial time and subsequently measuring it at single instants of time, 
such as those defining the Leggett-Garg type inequalities (LGtIs) \cite{Huelga1995,Waldherr2011,Emary2014,Zhou2015}. 
On the other hand, we also show that when the multi-time statistics is no longer Markovian there 
is no definite 
connection between its non-classicality and the quantum coherences involved in the evolution.


\section*{Multi-time probabilities and classicality}%
Let us first recall the definition of quantum multi-time probability distributions
and the notion of classicality used throughout the work.

Consider a quantum system associated with a Hilbert space $\mathcal{H}$ and evolving unitarily in time.
If we make projective measurements
of the observable 
$
\hat{X} 
$
at the times $t_n \geq \ldots t_1$, with discrete outcomes denoted as $x$,
the joint probability distribution to get
$x_1$ at time $t_1$ and $x_2$ at time $t_2$, $\ldots$ and $x_n$ at time $t_n$ is given by \cite{Gardiner2004}
\begin{equation}\label{eq:hier}
Q^{\hat{X}}_n\left\{x_n, t_n; \ldots x_1,t_1\right\} 
= \mbox{Tr}\left\{\mathcal{P}_{x_n}\mathcal{U}_{t_n-t_{n-1}} \ldots \mathcal{P}_{x_1}\mathcal{U}_{t_1}  \rho(0)\right\},
\end{equation}
where $\mathcal{U}_t\rho = U_t \rho U^{\dag}_t$ 
and $\mathcal{P}_x\rho = \hat{\Pi}_x\rho \hat{\Pi}_x$, with $\hat{\Pi}_x$ projector into the eigenspace of $x$; every super-operator
acts on everything at its right. 
Furthermore, we wrote the operator $\hat{X}$ explicitly to indicate that the statistics will depend on the measured observable.
The collection of joint probability distributions defined in Eq.\eqref{eq:hier} will be the central object
of our analysis; note that, on more mathematical terms, it can be traced back to a proper
definition of quantum stochastic processes, as introduced in \cite{Lindblad1979,Accardi1982} and most recently investigated
by means of the so-called comb formalism \cite{Chiribella2008} in \cite{Milz2017}.

The starting point of our analysis is then the following question: given the quantum 
multi-time probabilities in Eq.\eqref{eq:hier} as input,
how can we certify or exclude that there exists an alternative, classical way to account for them?
The Kolmogorov consistency conditions \cite{Feller1971,Breuer2002}
provide us with a clearcut answer.
In fact, whenever
the probabilities defined in Eq.(\ref{eq:hier}) satisfy
\begin{eqnarray}
&&\sum_{x_k} Q^{\hat{X}}_n\left\{x_n, t_n;  \ldots  x_1,t_1\right\} = Q^{\hat{X}}_{n-1}\left\{x_n, t_n; \ldots \slashed{x_k},\slashed{t_k}  \ldots x_1, t_1\right\} \nonumber\\
&&\forall k \leq n \in \mathbbm{N},\, n>1; \;\; \forall t_n\geq \ldots \geq t_1\in \mathbbm{R}^+;\label{eq:kol}
\;\;\forall x_1, \ldots x_n
\end{eqnarray}
the Kolmogorov extension theorem guarantees the existence of 
a classical stochastic process whose joint probability distributions are equal to these $Q^{\hat{X}}_n$.
Such a process may be rather ad-hoc or exotic, but, as a matter of fact, 
every statement about the quantumness of the outcomes' statistics
and the inherently quantum origin of any related phenomenon
cannot be unambiguously motivated on the basis of probability distributions satisfying Eq.\eqref{eq:kol}. 
Indeed, the joint probabilities of every classical stochastic process do satisfy the Kolmogorov conditions,
while
this is in general not the case for the hierarchy of probabilities in Eq.(\ref{eq:hier}),
since non-selective measurements
(i.e., $\rho \mapsto \sum_x \mathcal{P}_x \rho$) may modify the state of a quantum system.

We can thus formalize the notion of classicality provided by the Kolmogorov conditions, also keeping in mind
that the whole hierarchy of probabilities cannot be reconstructed practically,
as one always deals with a certain finite number of outcomes.\\

\noindent \textbf{Definition 1 (j-classical (jCL) multi-time statistics).}\\
The collection of joint probability distributions $Q^{\hat{X}}_n\left\{x_n, t_n; \ldots x_1,t_1\right\}$
is jCL 
whenever the Kolmogorov conditions in Eq.\eqref{eq:kol} hold for any $n \leq j$;
we say that it is non-classical if it is not even 2CL.\\

Let us stress that identifying the classical statistics with those 
satisfying the Kolmogorov conditions means that we are not taking into
account classical theories with invasive measurements
and, in particular, with singalling in time \cite{Kofler2013,Clemente2016,Halliwell2017},
since the latter would lead to a violation of Eq.\eqref{eq:kol} also at the classical level \cite{Milz2017}.
On the other hand, our definition is certainly well-motivated
by the ubiquity and broad scope of classical stochastic processes.


\section*{Open-quantum-system description and Markovianity}
Before focusing on the possible role of quantum coherence in relation with the notion
of non-classicality specified above,
we want to extend our formalism, by taking into account the
interaction of the measured system with its environment, i.e., treating it as an open quantum system \cite{Breuer2002}.
Indeed, this is to
ensure a realistic description of the system at hand,
including decoherence effects which strongly affect in particular quantum coherence. This
will also allow us to introduce a notion of Markovianity playing a central role in our following analysis.
 
Hence, let us assume that the total system, associated with the Hilbert space $\mathcal{H}$,
is made up of an open system and an environment, i.e.,
we have $\mathcal{H} = \mathcal{H}_S \otimes \mathcal{H}_E$,
$\mathcal{H}_S$ ($\mathcal{H}_E$) being the Hilbert space associated with the system (environment). 
The total system is supposed to be closed, thus evolving via the unitary operators $\mathcal{U}_t$.
Crucially, since the observables we are interested in are
related to the open system only, we focus on measurements of 
observables of the form
$
\hat{X} = \hat{X}_S \otimes \mathbbm{1}.
$
Moreover, $\hat{X}_S$ is assumed to be non-degenerate
and, from now on, $\mathcal{P}_{x}$ denotes a projector defined on $\mathcal{H}_S$ only, $\mathcal{P}_{x} \rho_S = \ketbra{\psi_{x}}{\psi_{x}} \rho_S \ketbra{\psi_{x}}{\psi_{x}}$.

Now, if we assume a product initial state, $\rho(0) = \rho_S(0) \otimes \rho_E(0)$,
with a fixed initial state of the environment,
we can express the one-time statistics of the open system
without referring to the global system and its unitary evolution $\mathcal{U}_t$.
Defining the family of completely positive trace preserving dynamical maps $\left\{\Lambda_S(t)\right\}_{t\in\mathbbm{R^+}}$
via
$$
\rho_S(t) = \Lambda_S(t)\rho_S(0) = \mbox{tr}_E\left\{\mathcal{U}_t \left(\rho_S(0) \otimes \rho_E(0)\right)\right\},
$$
with $\mbox{tr}_E$ ($\mbox{tr}_S$) the partial trace over the environment (system), one has
in fact
$
Q^{\hat{X}_S}_1(x, t) = \mbox{tr}_S\left\{\mathcal{P}_x\Lambda_S(t)\rho_S(0) \right\}.
$
Analogously, the conditional probabilities with respect to the initial time can be written as:
\begin{equation}\label{eq:prec}
Q^{\hat{X}_S}_{1|1}\left\{x,t |x_0, 0\right\} = \mbox{tr}_S\left\{\mathcal{P}_x \Lambda_S(t) \left[\ketbra{\psi_{x_0}}{\psi_{x_0}}\right]\right\}.
\end{equation}

In general, such a simple characterization is not feasible for the higher order statistics:
the multi-time joint probabilities have to be evaluated by referring to the full system, i.e.,
one has to replace $\hat{X} = \hat{X}_S \otimes \mathbbm{1}$
in Eq.(\ref{eq:hier}) and to deal with the whole unitary evolution. Only in this way, in fact, one
can keep track of the correlations between the open system and the environment created 
by their interaction up to a certain time and affecting the open-system multi-time
statistics at subsequent times \cite{Swain1981,Guarnieri2014}.
An important exception to this state of affairs is provided by the quantum regression theorem (QRT) \cite{Lax1968,Swain1981,vanKampen1992,Carmichael1993,Breuer2002,Gardiner2004,Guarnieri2014,Talkner1986,Ford1996,Davies1974,Gorini1976,Ban2017}.
Under proper conditions, which essentially allow to neglect
the effects of system-environment correlations at a generic time \cite{Swain1981,Carmichael1993,Guarnieri2014}, 
the joint distributions $Q^{\hat{X}_S}_n$
can be fully determined by the initial reduced state $\rho_S(0)$ and the dynamical maps $\Lambda_S(t)$.
In the following, whenever we assume the QRT, we also assume that the system dynamics 
is described by the Lindblad equation \cite{Lindblad1976, Breuer2002}
$ d\rho_S(t)/dt = \mathcal{L}\rho_S(t) = -i \left[\hat{H}, \rho_S(t)\right] + \sum_k \left(
\hat{L}_k \rho_S(t) \hat{L}^{\dag}_k - \frac{1}{2}\left\{\hat{L}^{\dag}_k \hat{L}_k,\rho_S(t) \right\}\right)$, with
$\hat{H}^{\dag} = \hat{H}$ and $\hat{L}_k$ linear operators on $\mathcal{H}_S$;
the corresponding dynamical
maps
$
\Lambda_S(t) = e^{\mathcal{L}t}
$
satisfy the semigroup composition law $\Lambda_S(t)\Lambda_S(s) = \Lambda_S(t+s)$ for any $t,s \in\mathbbm{R^+}$.
Explicitly, the QRT for the joint probability distributions associated with projective measurements implies
\begin{widetext}
\begin{eqnarray}\label{eq:LQRT}
 Q^{\hat{X}_S}_n\left\{x_n, t_n \ldots x_1,t_1\right\}  = \mbox{tr}_S\left\{\mathcal{P}_{x_n}e^{\mathcal{L}(t_n-t_{n-1})}\ldots \mathcal{P}_{x_1}e^{\mathcal{L}t_1}\rho_S(0)\right\}. 
\end{eqnarray}
\end{widetext}
The previous relation is similar to the general definition in Eq.(\ref{eq:hier}), but, crucially,
now the whole hierarchy of probabilities involves exclusively objects referring to the open system only.

As shown in App.\ref{app:qm}, the QRT for a non-degenerate observable implies the following
property of the conditional probabilities:
\begin{align}
&Q^{\hat{X}_S}_{1|n}\left\{x_{n+1}, t_{n+1}| x_{n}, t_{n}; \ldots x_1,t_1\right\} \nonumber\\
&= Q^{\hat{X}_S}_{1|1}\left\{x_{n+1}, t_{n+1}| x_{n}, t_{n}\right\},  \nonumber
\end{align}
which is nothing else than the Markov condition \cite{Feller1971,Breuer2002}
stating that the value of the observable at a certain time, conditioned
on its previous history, only depends on the last assumed value.
As well-known,
the QRT plays the counterpart of classical Markov processes for the quantum multi-time statistics \cite{Lindblad1979,Fleming2011,LoGullo2014}; 
see App.\ref{app:qm}, also in relation with the notions of quantum Markovianity
referring, instead, to the dynamics \cite{Rivas2014, Breuer2016}.
We then proceed by introducing the following definition.\\

\noindent \textbf{Definition 2 (j-Markovian (jM) multi-time statistics).}
The collection of joint probability distributions $Q^{\hat{X}_S}_n\left\{x_n, t_n; \ldots x_1,t_1\right\}$ is jM \cite{foot4}
if it can be written as in Eq.\eqref{eq:LQRT}
for any $n\leq j, x_1, \ldots x_n, t_n \geq \ldots t_1\in \mathbbm{R}^+$; it is non-Markovian (NM) if it is not even 2M.\\

The key property of Markovian processes (irrespective of whether there is or there is not
an equivalent classical description of them) is that the entire hierarchy of probabilities can be reconstructed from the initial probability
$Q^{\hat{X}_S}_1\left\{x_0,0\right\}$ and the transition probabilities $Q^{\hat{X}_S}_{1|1}\left\{x, t|y, s\right\}$.
As we will see, this plays a basic role in our analysis.



\section*{Generating and detecting quantum coherence} 
Here we present the property of quantum coherence directly related to the non-classicality possibly emerging from 
repeated measurements of a quantum observable. 
Roughly speaking, we need to characterize the evolutions which not only generate coherences, but can also turn such coherences
into the populations measured at a later time. 

Therefore, consider the following definition, which refers explicitly to Lindblad dynamics; 
in App.\ref{app:ex} we introduce the definition for a generic (divisible) dynamics.\\

\noindent \textbf{Definition 3 (Coherence-generating-and-detecting (CGD) dynamics).}
The Lindblad dynamics \cite{foot5} $\left\{\Lambda(t)=e^{\mathcal{L} t}\right\}_{t\in\mathbbm{R^+}}$
        is CGD
	whenever there exist $t, \tau \in\mathbbm{R^+}$ such that
	(here, for the sake of clarity, we denote explicitly the map composition as $\circ$)
	\begin{equation}\label{eq:dellam}
	\Delta \circ \Lambda(t)\circ \Delta \circ \Lambda(\tau)\circ\Delta \neq \Delta \circ \Lambda(t+\tau)\circ\Delta,
	 \end{equation}
	where $\Delta= \sum_x \mathcal{P}_x$ is the complete dephasing map; otherwise, the dynamics is denoted as NCGD.\\
		
We always assume that the reference basis defining $\Delta$ coincides with the eigenbasis
of the measured observable $\hat{X}$.
See Fig.\ref{fig:1} for an illustrative sketch of the NCGD definition.

\begin{figure}[ht!]
\begin{center}
\includegraphics[width=0.8\columnwidth]{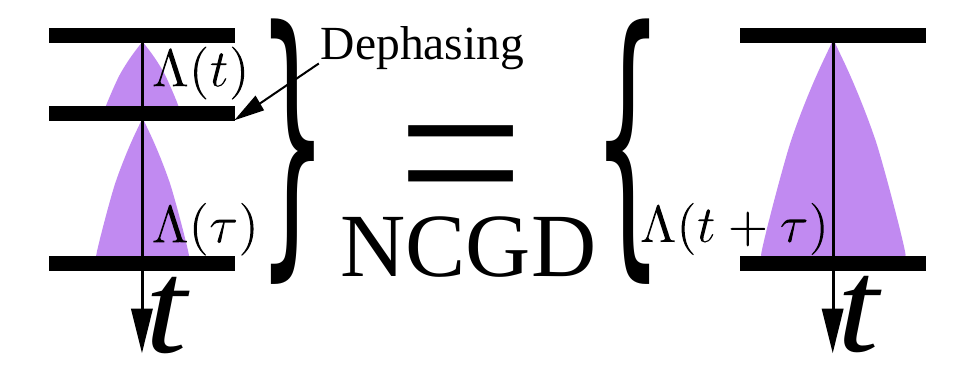}
\caption{Illustrative sketch of the property of NCGD dynamics: 
fixed a reference basis, applying dephasing at an intermediate 
instant of the dynamics does not change the state transformation, if dephasing is also
applied at the initial and final time.}
\label{fig:1}
\end{center}
\end{figure}

Recently there has been considerable interest in incoherent operations
\cite{aaberg2014catalytic,Baumgratz2014, Streltsov2017,Liu2017}, which are defined for maps only. 
In contrast here we investigate dynamics. We can compare our notion with the literature by fixing $t= \tau$
in the Definition 3, thus referring it to one map, $\Lambda \equiv \Lambda(t)$, which we call CGD map.
There are two interesting subsets of NCGD maps.
One is the subset that does not create coherence  from incoherent states, which is described by $\Delta \circ \Lambda\circ\Delta = \Lambda\circ\Delta$; this is the maximal set of incoherent operations~\cite{aaberg2014catalytic}. 
The other noteworthy subset of NCGD maps is the coherence nonactivating set
fixed by $\Delta \circ \Lambda\circ\Delta = \Delta\circ\Lambda$; here, since the populations are independent of the initial coherence, the coherence is not a useable resource~\cite{Liu2017}. 
Operations that are neither incoherent nor coherence nonactivating may still be NCGD, if the
sub\-spaces where coherence is generated are different from the ones detecting it (see App.~\ref{app:ch} for a detailed example). 

We conclude that NCGD dynamics can be understood by the propagated population not depending on the generated coherences.
In addition, we can provide a direct operational meaning to (N)CGD, as ensured by the following proposition, which is proved in App:~\ref{app:semi}.\\

\noindent \textbf{Proposition 1.}
Given a non-degenerate reduced observable 
$\hat{X} = \sum_{x} x \ketbra{\psi_x}{\psi_x}$ and the Lindblad dynamics $\left\{\Lambda(t) = e^{\mathcal{L}t}\right\}_{t \in\mathbbm{R^+}}$, 
the latter is NCGD if and only if 
the conditional probabilities $Q^{\hat{X}}_{1|1}\left\{x,t|x_0,0\right\}$
satisfy $\forall t \geq s \in\mathbbm{R^+}$
\begin{equation}\label{eq:chkol}
Q^{\hat{X}}_{1|1}\left\{x,t|x_0,0\right\} = \sum_{y}   Q^{\hat{X}}_{1|1}(x, t-s | y, 0)  Q^{\hat{X}}_{1|1}(y, s | x_0, 0).
\end{equation}

The condition in Eq.\eqref{eq:chkol} is simply the (homogeneous) Chapman-Kolmogorov equation \cite{vanKampen1992,Breuer2002,Gardiner2004},
which is always satisfied by a classical Markov (homogeneous) process, but, indeed, not necessarily by a quantum one.
As we will see, the relation between NCGD and classicality relies on Proposition 1.
For the moment, let us stress that Eq.\eqref{eq:chkol} can be in principle easily checked in practice, since 
the conditional probabilities $Q^{\hat{X}}_{1|1}\left\{x,t|x_0,0\right\}$ can be reconstructed by preparing the system in one eigenstate of $\hat{X}$
and measuring $\hat{X}$ itself at a final time $t$, without the need to access intermediate steps
of the evolution. 

The previous proposition also allows us to connect CGD with 
other easily accessible quantities, which are already well-known in the literature. As a significant example, let us mention
the LGtIs \cite{Huelga1995,Emary2014},
which were introduced to characterize macroscopic realistic theories,
that is, classical theories where physical systems
possess definite values (whether or not they are measured)
and such values can be accessed without changing the state of the system.
In particular, in LGtIs the 
Leggett-Garg non-invasiveness requirement \cite{Leggett1985}
is replaced by an assumption which turns out to be related to Markovianity \cite{Emary2014}.
Given a dichotomic observable $X$ with values in $\left\{0,1\right\}$ and the related correlation function, which in the quantum framework is defined as
$C_X(t,0) = \sum_{x, x_0} Q^{\hat{X}}_2\left\{ x, t;  x_0, 0\right\} x  x_0$, the LGtI
we consider here reads
$
\left |2C_X(t,0) - C_X(2t,0)\right |  \leq  \langle X(0) \rangle$, 
with $\langle X(0) \rangle$ the expectation value of $X$ at the initial time.
Now, since the validity of Eq.\eqref{eq:chkol} is a sufficient condition for the LGtI to be satisfied, see App.\ref{app:semi}, 
Proposition 1 directly leads us to the following.\\

\noindent \textbf{Theorem 1.}
Given a Lindblad dynamics, 
the LGtI is violated only if the dynamics is CGD.\\

The Theorem thus clarifies how the LGtI can be used to witness
that coherences are generated by the dynamics 
and subsequently turned into populations. 

\section*{Quantum coherence and non-classicality}
We are now in the position to state the main result of our paper.
In the previous section we have seen how the
NGDC property of the dynamics is related with the Chapman-Kolmogorov composition law
for the conditional probabilities with respect to the initial time [see Proposition 1].
However,
if we want to establish a definite connection between coherences and (non-)classicality, we need 
to take a step further and to go beyond the one-time statistics
to access the higher orders
of the hierarchy of probabilities, since only the latter encompass the definite meaning of classicality we are referring to. 

The recalled notion of quantum
Markovianity for multi-time statistics does provide us with the wanted link among coherences and classicality. 
This is shown by the following Theorem,
whose proof is presented in App.\ref{app:semi2}.\\

\noindent \textbf{Theorem 2.}
Given a non-degenerate reduced observable 
$\hat{X} = \sum_{x} x \ketbra{\psi_x}{\psi_x}$ and a jM hierarchy 
of probabilities $Q^{\hat{X}}_n\left\{x_n, t_n; \ldots x_1,t_1\right\}$,
the latter is jCL for any initial diagonal state $\rho(0) = \sum_{x_0} p_{x_0} \ketbra{\psi_{x_0}}{\psi_{x_0}}$ 
if and only if the dynamics
$\left\{\Lambda(t) = e^{\mathcal{L}t}\right\}_{t\in \mathbbm{R}^+}$
is NCGD.\\

Theorem 2 means that if the multi-time statistics is Markovian, the capability of a dynamics to generate coherences and turn them into populations
is in one-to-one correspondence with non-classicality.
In other words, Markovianity guarantees the wanted connection between a property of the coherences, which is
fixed by the dynamics, and the classicality of the multi-time probability distributions.
This is a direct consequence of the peculiarity of Markovian processes, classical as well as quantum,
which allows one to reconstruct the higher order probability distributions from the lowest order one.

Finally, the previous Theorem also allows us to clarify to what extent the LGtI is actually related with non-classicality,
since it directly implies the following.\\

\noindent \textbf{Theorem 3.}
Given a 2M hierarchy 
of probabilities, the LGtI is violated only if the hierarchy is non-classical.

For the sake of clarity, in Fig.\ref{fig:1a} we report a summary of the theorems presented
in this paper, which establish definite connections among the notions
of classicality, quantum coherence (in particular NCGD of the dynamics)
and LGtI.

\begin{figure}[ht!]
\begin{center}
\includegraphics[width=0.75\columnwidth]{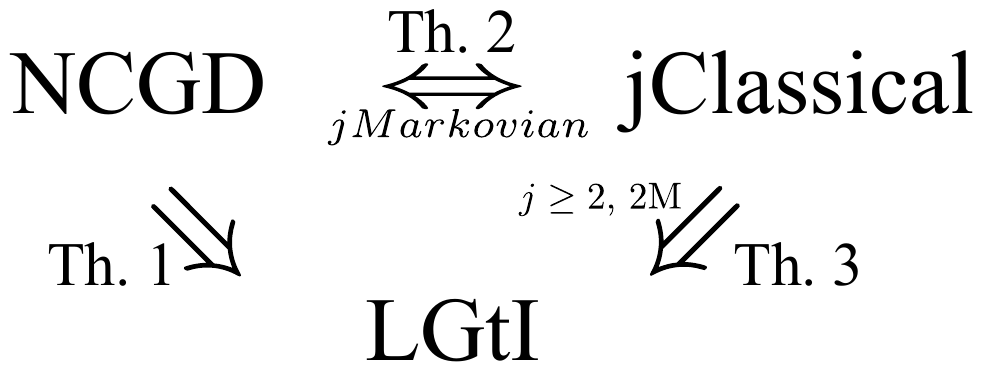}
\caption{Implication structure of the main results of the paper. 
The notion of $j$-classicality is given in Definition 1, $j$-Markovianity in Definition 2
and the property of the evolution of coherence 
named NCGD, which stands for non-coherence-generating-and-detecting, in Definition 3;
finally LGtI denotes the Leggett-Garg type inequality \cite{Huelga1995} considered here.  
A Lindblad dynamics is always assumed.}
\label{fig:1a}
\end{center}
\end{figure}

\section*{Non-Markovian multi-time statistics}
In the last part of the paper, we start to explore the general case
of non-Markovian multi-time statistics.
Indeed, now the connection between quantum coherence
and the non-classicality of the multi-time statistics is no longer guaranteed,
since the higher order probabilities cannot be inferred from the lowest order ones. 
Exploiting
a model which traces back to Lindblad himself \cite{Lindblad1980, Accardi1982},
we show that this connection is lost even in the presence of a simple Lindblad dynamics. 
Remarkably, there can be a genuinely non-classical statistics associated with the measurements
of an observable without that any quantum coherence of such observable is ever present in the state of the measured system.

Consider a two-level system, $\mathcal{H}_S= \mathbbm{C}^2$, interacting with a continuous degree of freedom, $\mathcal{H}_E = \mathcal{L}(\mathbbm{R})$,
via the unitary operators defined 
by
$
U(t) \ket{\ell, p} = e^{i \ell p t} \ket{\ell, p},
$
where $\left\{\ket{\ell}\right\}_{\ell=-1,1}$ is the eigenbasis of the system operator $\hat{\sigma}_z$
and 
$\left\{\ket{p}\right\}_{p \in \mathbbm{R}}$ is an improper basis of $\mathcal{H}_E$. 
Assuming an initial product state
and a pure environmental state, $\rho_E(0) = \ketbra{\varphi_{E}}{\varphi_{E}}$ with
$
\ket{\varphi_{E}} = \int_{-\infty}^{\infty} d p f(p) \ket{p}$,
the open-system dynamics is pure dephasing, fixed by $\rho_{-11}(t)=\rho_{-11}(0) k(t)$
with $k(t) = \int_{-\infty}^{\infty} d p |f(p)|^2 e^{2 i p t}$, where $\rho_{-11} = \bra{-1} \rho \ket{1}$.
We consider projective measurements of $\hat{\sigma}_x$, whose eigenbasis is denoted as $\left\{\ket{+}, \ket{-}\right\}$,
and then we assume initial states as
$\rho(0) = p_+ \ketbra{+}{+}+p_- \ketbra{-}{-}$.
In App.\ref{app:ex} we report the exact two-time
probability $Q^{\hat{\sigma}_x}_2$, given by Eq.\eqref{eq:hier},
and the probability $Q^{\hat{\sigma}_x}_{2M}$ one would get for a Markovian statistics,
see Eq.\eqref{eq:LQRT},
along with the conditions for the dynamics to be CGD and the statistics 2CL.

First, consider an initial Lorentzian distribution,
$
|f(p)|^2 = \Gamma \pi^{-1}/(\Gamma^2+(p-p_0)^2),
$
so that the decoherence function is given exactly by an exponential,
$
k(t)= e^{2 i p_0 t} e^{- 2 \Gamma t},
$
and the open-system dynamics is fixed by the pure dephasing Lindblad equation \cite{Guarnieri2014}.
Nevertheless, the QRT in Eq.\eqref{eq:LQRT} is generally not satisfied, 
not even by the two-time probability distributions, so that the multi-time statistics is NM.
The difference, also qualitative, among the exact joint probability distribution $Q^{\hat{\sigma}_x}_2$
and the Markovian one $Q^{\hat{\sigma}_x}_{2M}$ is illustrated in Fig.\ref{fig:2}  {\bf a)}.
Furthermore, one can easily see App.\ref{app:ex} that the Kolmogorov condition for $n=2$
does not hold, $\sum_y Q^{\hat{\sigma}_x}_{2}\left\{+, t; y, s\right\} \neq Q^{\hat{\sigma}_x}_{1}\left\{+, t\right\}$.
The statistics at hand is hence non-classical. On the other hand, the corresponding Lindblad dynamics (for $p_0=0$) is NCGD:
pure dephasing on $\hat{\sigma}_z$ cannot even generate coherences of $\hat{\sigma}_x$ (of course,
this also implies that the corresponding LGtI is always satisfied). 
We conclude that, despite the non-classicality of the statistics,
the coherences of the measured observable $\hat{\sigma}_x$ are not involved at all in the dynamics:
at no point in time the state of the measured system has non-zero coherence
with respect to $\hat{\sigma}_x$. 

\begin{figure}[ht!]
\begin{center}
\includegraphics[width=0.45\columnwidth]{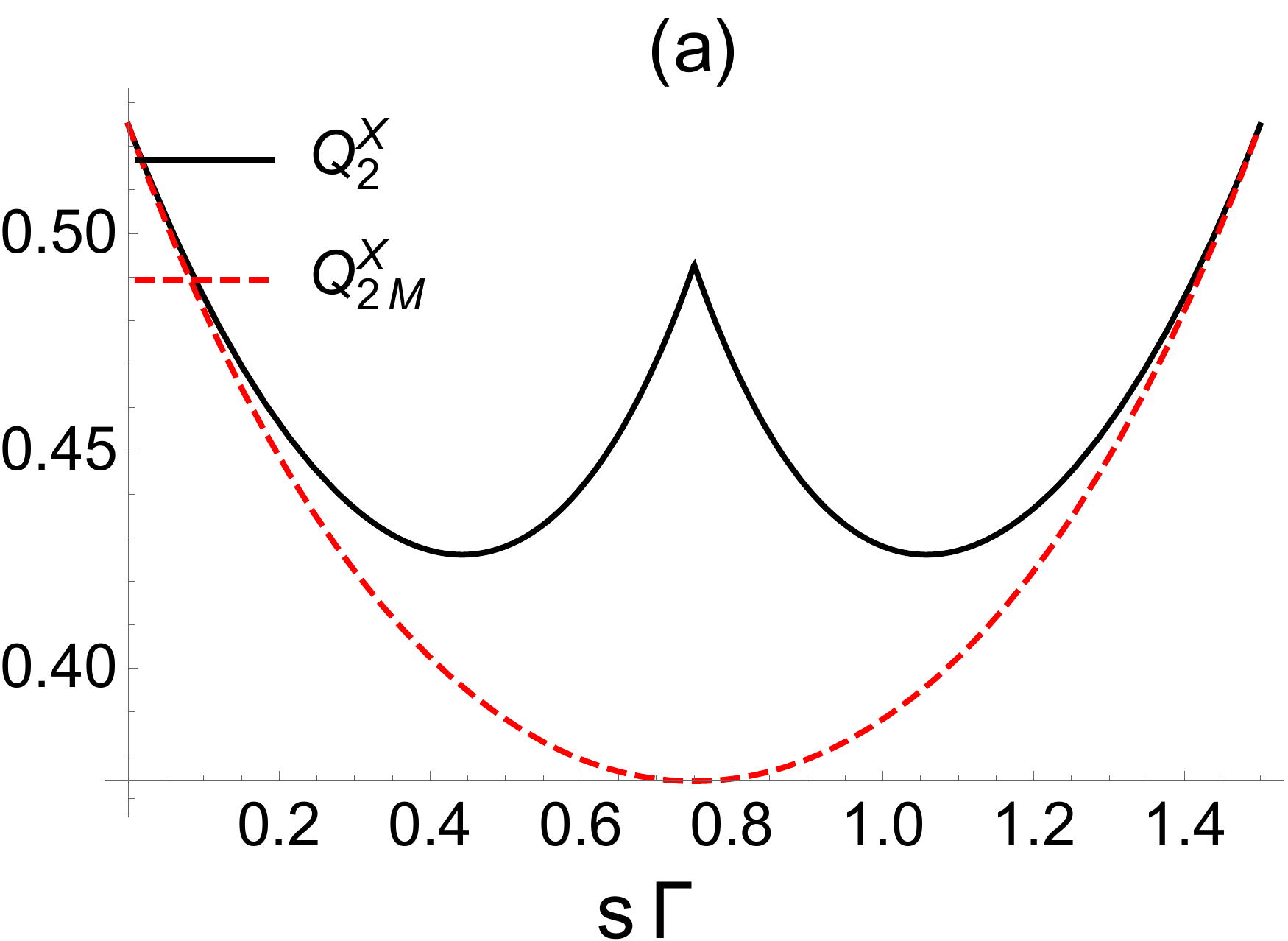} \includegraphics[width=0.48\columnwidth]{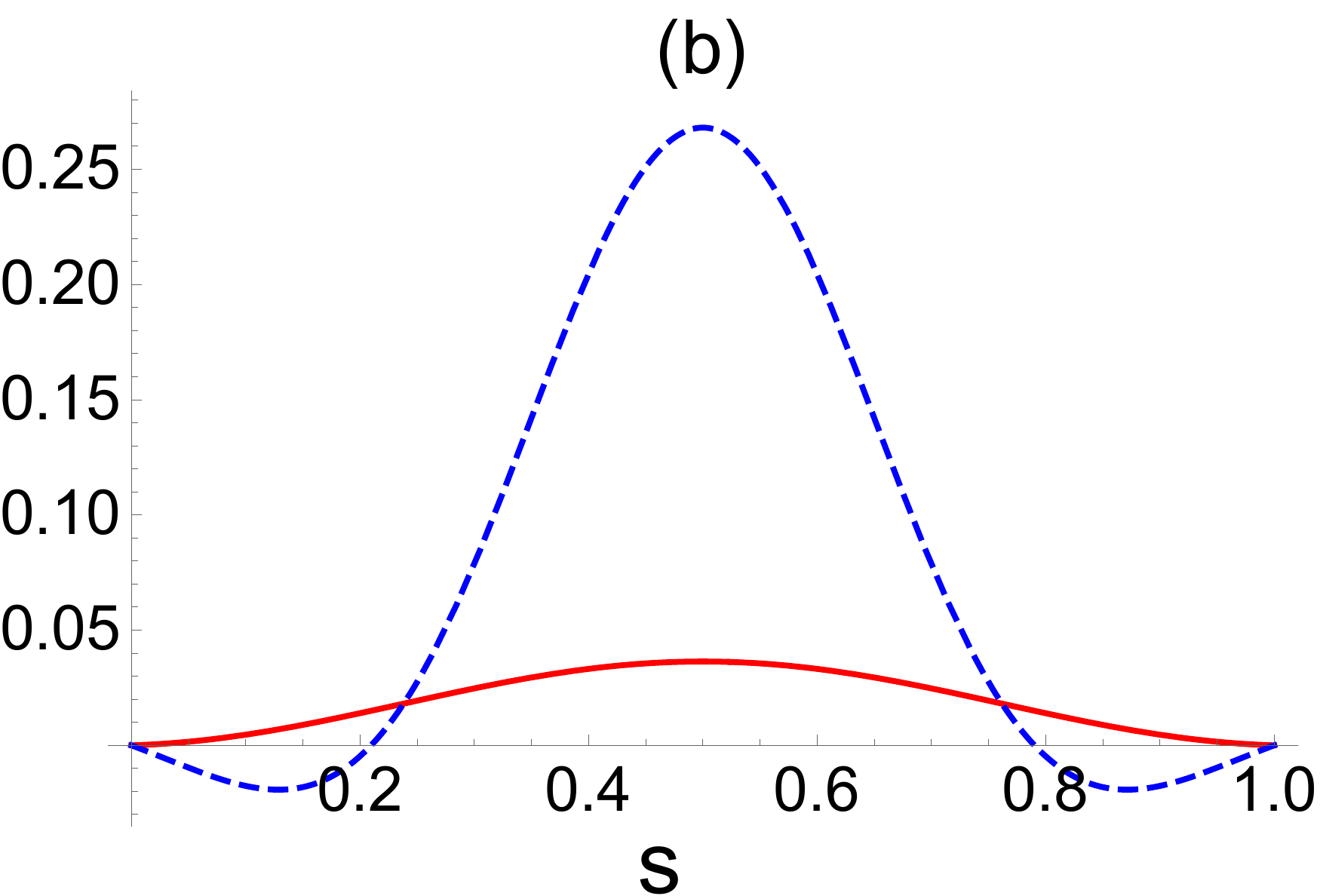}
\caption{{\bf a)} Exact two-time probability
$Q^{\hat{\sigma}_x}_{2}\left\{+, t; +, s\right\}$ [see Eq.\eqref{eq:hier}] (solid black line) 
and 2M probability $Q^{\hat{\sigma}_x}_{2M}\left\{+, t; +, s\right\}$ [see Eq.\eqref{eq:LQRT}] (dashed red line)
as fun\-ctions of $s$,
for a Lorentzian $|f(p)|^2$.
The parameters are $t=1.5\Gamma^{-1}, p_0=0$.
 {\bf b)} Amount of CGD (according to the general definition given in App.\ref{app:ex}) quantified via the
infinity norm, (solid, red line) and violation of the Kolmogorov condition [i.e., 
 $|\sum_y Q^{\hat{\sigma}_x}_{2}\left\{+, t; y, s\right\} - Q^{\hat{\sigma}_x}_{1}\left\{+, t\right\}|$] (dashed, blue line) as
 functions of $s$,
for $|f(p)|^2$ given by the superposition of two Gaussians.
The parameters are $A_1=A_2=1/(2\sqrt{2\pi}), \sigma_1=\sigma_2=\sigma, \sigma=p_1=t=1, p_2=2p_1$.
}
\label{fig:2}
\end{center}
\end{figure}

In a complementary way, we exemplify how the instants 
where the multi-time statistics satisfies the Kolmogorov conditions may coincide with instants where 
coherences are generated and converted into populations. 
However, we have to leave open the question of whether 
there is a fully classical statistics (for any sequence of times), while the dynamics of the coherences is non-trivial.
Take an initial distribution given by the sum of two Gaussians, $|f(p)|^2= \sum_{i=1,2} A_i \exp\left(-(p-p_i)^2/(2 \sigma^2_i)\right)$.
Once again one can easily see that the statistics is NM [see App.\ref{app:ex}], but this time the dynamics is generally CGD. 
Nevertheless, the creation and detection of coherences is not in correspondence with the non-classicality of the statistics.
In Fig.\ref{fig:2} {\bf b)}, we can see that there are instants of time where the dynamics is CGD, but the statistics is 2CL, see also App.\ref{app:ex}; note that at these same instants of time also the Chapman-Kolmogorov condition in
Eq.\eqref{eq:chkol} does not hold.
By investigation (not reported here) of the model at hand in a wide region of parameters, we also observe that there does not seem to be a threshold in the amount of CGD, above which the violation of 2CL is guaranteed.

The previous examples illustrate the essential role of Markovianity to establish a precise link between quantum coherence, 
or any other dynamical property,
and non-classicality. In addition, they imply that
the coherences themselves cannot be used as a witness of non classicality, without any a-priori information about higher order probabilities. 
To know whether coherences are linked to jCL, one needs to access $Q^{\hat{X}}_j$ to verify jM,
but then jCL can be directly checked via the Kolmogorov conditions.

\section*{Conclusions}
We proved a one-to-one correspondence between 
the non-classicality of the multi-time statistics
associated with sequential measurements of one observable at different
times and the quantum coherence with respect to the eigenbasis of the measured observable itself.
We pointed out the key property of quantum coherence which is directly linked with non classicality,
connecting it to the recently developed resource theory of coherence.
Furthermore, we illustrated the essential role of Markovianity in linking dynamical properties, such as the evolution of quantum coherence or the violation of the LGtI,
to higher order probability distributions of the multi-time statistics and hence to (non-) classicality.

Our approach will naturally apply to several areas where the possible
quantum origin of certain physical phenomena is under debate, such as quantum biology or quantum thermodynamics.
We plan to exploit the results presented here to study some relevant examples taken from these fields of research.
In addition, we want to extend our analysis to classical theories with invasive measurements
starting from the notion of signaling in time; in particular,
we think that our approach will further motivate the investigation of the scenario in which memory effects
are present, at the level of the quantum multi-time statistics
and/or of the measurement invasiveness  \cite{Montina2012,Frustaglia2016,Cabello2018}.
Finally, it will be of interest to include the description of non projective measurements, as well as the
measurement of multiple, non-commuting observables. \\

{\bf{Acknowledgments}}
	We acknowledge very interesting and fruitful discussions with Thomas Theurer and Benjamin Desef,
	as well as financial support by the ERC Synergy Grant BioQ (grant no 319130),
	the EU project QUCHIP (grant no. 641039) and Fundaci{\'o}n {''}la Caixa{''}.

\newpage

\onecolumngrid
\appendix

\section{Quantum regression theorem and quantum Markovianity}\label{app:qm}
Here we want to discuss more in detail why the QRT can be naturally seen as the quantum counterpart 
of the Markov condition for the hierarchy of probabilities defined in the main text.

Now, let 
\begin{eqnarray}\label{eq:cond}
 Q^{\hat{X}}_{k|n} \left\{x_{n+k}, t_{n+k}; \ldots x_{n+1}, t_{n+1}| x_{n}, t_{n}; \ldots x_1, t_1  \right\}
= \frac{Q^{\hat{X}}_{k+n} \left\{x_{n+k}, t_{n+k}; \ldots x_1, t_1  \right\}}{Q^{\hat{X}}_{n} \left\{x_{n}, t_{n}; \ldots x_1, t_1 \right\}} 
\end{eqnarray}
be the conditional probability distributions associated with the hierarchy of probability distributions defined in Eq.(1) of the main text, referred to a reduced observable $\hat{X}\otimes \mathbbm{1}$
(we keep implying the subfix $S$ for the sake of simplicity); Eq.\eqref{eq:cond} can be easily obtained from the general definition
based on the conditional state \cite{Gardiner2004}, using the projectors into the eigenspaces of $\hat{X}\otimes \mathbbm{1}$, which are degenerate
with respect to the global space $\mathcal{H}$.
If we now express the right hand side of the previous relation via the QRT, i.e., Eq.(4) of the main text, we can exploit the non-degeneracy of $\hat{X}$ on $\mathcal{H}_S$.
Because of that, for any couple of reduced super-operators $\mathcal{A}$ and $\mathcal{B}$
we have
$$
\frac{ \mbox{tr}_S\left\{\mathcal{A} \mathcal{P}_x \rho_S\right\}}{ \mbox{tr}_S\left\{\mathcal{B} \mathcal{P}_x \rho_S\right\}} 
= \frac{ \mbox{tr}_S\left\{\mathcal{A} [\ketbra{\psi_x}{\psi_x}]\right\}}{ \mbox{tr}_S\left\{\mathcal{B} [\ketbra{\psi_x}{\psi_x}]\right\}},
$$
from which one can easily see that the hierarchy defined in Eq.(4) of the main text satisfies the
condition
\begin{eqnarray}
 &&Q^{\hat{X}_S}_{1|n}\left\{x_{n+1}, t_{n+1}| x_{n}, t_{n}; \ldots x_1,t_1\right\} = Q^{\hat{X}_S}_{1|1}\left\{x_{n+1}, t_{n+1}| x_{n}, t_{n}\right\}  \nonumber\\
&&= \mbox{tr}_S\left\{\mathcal{P}_{x_{n+1}} e^{\mathcal{L}(t_{n+1}-t_{n})}[\ketbra{\psi_{x_{n}}}{\psi_{x_{n}}}]\right\}.\label{eq:qmarkov}
\end{eqnarray}
The first equality in the previous equation is the Markov condition, which defines Markov stochastic processes \cite{Breuer2002,Gardiner2004}.
Actually, the QRT is at the basis of the definition of quantum Markov processes put forward by Lindblad in \cite{Lindblad1979}
(see also the more recent definitions in \cite{Fleming2011, LoGullo2014}). 

On the other hand, different approaches have been followed to introduce a definition
of quantum Markovianity which can be referred to the dynamics of the open quantum systems \textit{tout-court}.
Indeed, the hierarchy of probabilities in Eq.(1), or in Eq.(4), of the main text
depends on the specific measurement procedure one is taking into account. Moreover and more importantly, the measurements
at intermediate times involved in that definition modify the correlations between the system and the environment, thus
modifying the subsequent dynamics of the open system \cite{Breuer2016}.
Hence, in order to assign the Markovian or non-Markovian attribute to the open system \textit{dynamics solely}, 
different definitions have been put forward (see the recent reviews \cite{Rivas2014, Breuer2016}),
relying directly on properties of the dynamical maps $\left\{\Lambda(t)\right\}_{t \in \mathbbm{R}^+}$.
Of course, the Markovianity referring to the dynamics solely and that referring to the multi-time probability distributions are quite different concepts, 
since in general the one-time statistics (and the related dynamical maps) does not allow to infer
the behavior of higher order distributions \cite{Vacchini2011}, and then, e.g., whether the QRT holds or not. 
The `proper' definition of quantum Markovianity ultimately depends on the framework one is interested in. 
Here, as said in Definition 2, we identify quantum Markov processes with those satisfying the QRT.

Finally, let us note that the Markov condition in Eq.(\ref{eq:qmarkov}) implies that, also in the
quantum case, the whole hierarchy of probabilities is fixed by the initial condition and the $Q^{\hat{X}_S}_{1|1}$
conditional probabilities,
i.e., we have for any $x_1, \ldots x_n, t_n \geq \ldots \geq t_1$
\begin{eqnarray}
 Q^{\hat{X}_S}_n\left\{x_n, t_n; \ldots; x_2, t_2; x_1, t_1\right\} 
&&=\left(
\prod_{k=1}^{n-1}Q^{\hat{X}_S}_{1|1}\left\{x_{k+1}, t_{k+1}| x_k, t_k\right\}
\right)
Q^{\hat{X}_S}_{1}\left\{ x_1, t_1\right\} \label{eq:eq} \\
&&= \left(
\prod_{k=1}^{n-1} \mbox{tr}\left\{\mathcal{P}_{x_{k+1}} e^{\mathcal{L}(t_{k+1}- t_k)} \left[\ketbra{\psi_{x_k}}{\psi_{x_k}}\right]\right\}
\right) 
\mbox{tr}\left\{\mathcal{P}_{x_{1}} e^{\mathcal{L} t_1} \rho_S(0)\right\}, \nonumber
\end{eqnarray}
where in the second line we used that since we are focusing on the Lindblad dynamics we have
\begin{eqnarray}\label{eq:qhat}
&&Q^{\hat{X}_S}_{1|1}\left\{x, t| y, s \right\} = \mbox{tr}_S\left\{\mathcal{P}_x e^{\mathcal{L}(t-s)} \left[\ketbra{\psi_{y}}{\psi_{y}}\right]\right\}
= Q^{\hat{X}_S}_{1|1}\left\{x , t-s| y, 0\right\}.
\end{eqnarray}

\section{Proof of the Proposition 1 and Theorem 1.}\label{app:semi}

Before proving the Proposition 1, we need to prove the following Lemma.

\begin{lemma}
The evolution fixed by the Lindblad dynamics $\left\{\Lambda(t)=e^{\mathcal{L}t}\right\}_{t \in \mathbbm{R}^+}$ is NCGD if and only if 
\begin{eqnarray}
 \sum_{y \neq z}\bra{\psi_{\tilde{x}}}\Lambda(t)[\ketbra{\psi_y}{\psi_{z}}]\ket{\psi_{\tilde{x}}} \bra{\psi_y}\Lambda(\tau)\left[\ketbra{\psi_x}{\psi_x}\right]\ket{\psi_{z}} = 0
\quad \forall x, \tilde{x}; t, \tau\in \mathbbm{R}^+. \label{eq:cop2}
\end{eqnarray}
\end{lemma}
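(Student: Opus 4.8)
The plan is to reduce the operator identity defining NCGD to a family of scalar equations and then to expand both sides explicitly using the semigroup property. The key observation is that both maps appearing in Eq.\eqref{eq:dellam} are of the form $\Delta\circ(\cdots)\circ\Delta$. Since $\Delta[\rho]=\sum_x\langle\psi_x|\rho|\psi_x\rangle\,\ketbra{\psi_x}{\psi_x}$, for any linear map $\mathcal{M}$ the composite $\Delta\circ\mathcal{M}\circ\Delta$ depends on its input only through the diagonal entries $\langle\psi_x|\rho|\psi_x\rangle$ and produces an output that is diagonal in the reference basis; by linearity it is therefore completely determined by the scalars $\langle\psi_{\tilde x}|\,\Delta\mathcal{M}\Delta[\ketbra{\psi_x}{\psi_x}]\,|\psi_{\tilde x}\rangle$ for all $x,\tilde x$. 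Consequently, the dynamics is NCGD if and only if, for all $x,\tilde x$ and all $t,\tau\in\mathbbm{R}^+$,
\[
\langle\psi_{\tilde x}|\,\Delta\Lambda(t)\Delta\Lambda(\tau)\Delta[\ketbra{\psi_x}{\psi_x}]\,|\psi_{\tilde x}\rangle
=\langle\psi_{\tilde x}|\,\Delta\Lambda(t+\tau)\Delta[\ketbra{\psi_x}{\psi_x}]\,|\psi_{\tilde x}\rangle .
\]

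Next I would evaluate the two sides separately. On the right, I use $\Lambda(t+\tau)=\Lambda(t)\circ\Lambda(\tau)$, drop the two outer $\Delta$'s (the rightmost acts trivially on $\ketbra{\psi_x}{\psi_x}$, and the leftmost is absorbed by taking the $\tilde x$ diagonal element of the output), and expand the operator $\Lambda(\tau)[\ketbra{\psi_x}{\psi_x}]$ in the basis $\{\ketbra{\psi_y}{\psi_z}\}$ by inserting $\mathbbm{1}=\sum_y\ketbra{\psi_y}{\psi_y}$ on either side, obtaining $\sum_{y,z}\langle\psi_y|\Lambda(\tau)[\ketbra{\psi_x}{\psi_x}]|\psi_z\rangle\,\langle\psi_{\tilde x}|\Lambda(t)[\ketbra{\psi_y}{\psi_z}]|\psi_{\tilde x}\rangle$. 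On the left, the middle dephasing $\Delta$ retains only the $y=z$ terms, leaving $\sum_{y}\langle\psi_y|\Lambda(\tau)[\ketbra{\psi_x}{\psi_x}]|\psi_y\rangle\,\langle\psi_{\tilde x}|\Lambda(t)[\ketbra{\psi_y}{\psi_y}]|\psi_{\tilde x}\rangle$.

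Subtracting the two expressions, the diagonal ($y=z$) contributions cancel exactly, and what remains is precisely $\sum_{y\neq z}\langle\psi_{\tilde x}|\Lambda(t)[\ketbra{\psi_y}{\psi_z}]|\psi_{\tilde x}\rangle\,\langle\psi_y|\Lambda(\tau)[\ketbra{\psi_x}{\psi_x}]|\psi_z\rangle$, i.e., the left-hand side of Eq.\eqref{eq:cop2}. Hence the NCGD condition is equivalent to the vanishing of this quantity for all $x,\tilde x,t,\tau$, which is the claim. I do not expect a serious obstacle here; the only point meriting care is the first step, namely the verification that equality of the two dephased channels is faithfully encoded by the scalar transition coefficients — this follows immediately from linearity together with the explicit form of $\Delta$ acting on both ends, but it is what makes the otherwise one-line computation rigorous.
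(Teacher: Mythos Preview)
Your proof is correct and follows essentially the same route as the paper's own argument: reduce the operator identity to scalar transition coefficients via the $\Delta\circ(\cdots)\circ\Delta$ structure, apply the semigroup law $\Lambda(t+\tau)=\Lambda(t)\Lambda(\tau)$, insert resolutions of the identity, and observe that the diagonal $y=z$ terms cancel. Your presentation is slightly more streamlined in that you establish the equivalence between the operator equality and the family of scalar equations in one step, whereas the paper handles the two implications separately, but the substance is identical.
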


\begin{proof}
	First note that $\left\{\Lambda(t)\right\}_{t \in \mathbbm{R}^+}$ satisfies Eq.\eqref{eq:cop2} if and only if $\forall x,\tilde{x}; t, s \in \mathbbm{R}^+$
	\begin{eqnarray}
	&& \bra{\psi_{\tilde{x}}}\;\Lambda(t)\circ \Lambda(\tau)\left[\ketbra{\psi_{x}}{\psi_{x}} \right]  \;\ket{\psi_{\tilde{x}}} \nonumber\\
	&&=\sum_{y,z} \bra{\psi_{\tilde{x}}}\;\Lambda(t)\Big[ \ketbra{\psi_y}{\psi_y} \; \Lambda(\tau)\left[\ketbra{\psi_{x}}{\psi_{x}} \right] \;\ketbra{\psi_z}{\psi_z}  \Big]  \;\ket{\psi_{\tilde{x}}} \nonumber\\
	&&=
	\sum_{y} \bra{\psi_{\tilde{x}}}\;\Lambda(t)\Big[ \ketbra{\psi_y}{\psi_y} \; \Lambda(\tau)\left[\ketbra{\psi_{x}}{\psi_{x}} \right] \;\ketbra{\psi_y}{\psi_y}  \Big]  \;\ket{\psi_{\tilde{x}}} \nonumber\\
	&&=
	\bra{\psi_{\tilde{x}}}\;\Lambda(t)\circ\Delta \circ\Lambda(\tau)\left[\ketbra{\psi_{x}}{\psi_{x}} \right]  \;\ket{\psi_{\tilde{x}}},\label{eq:innn}
	\end{eqnarray}
	since the first and the last equalities are always valid, while the second is Eq.\eqref{eq:cop2}, adding the diagonal terms on both sides.
	We first prove that if $\left\{\Lambda(t)\right\}_{t \in \mathbbm{R}^+}$ is NCGD then the above equality holds.
	We can in fact rewrite the first line as
	\begin{eqnarray}	
	&&\sum_{k,k'} \bra{\psi_{\tilde{x}}} \ketbra{\psi_k}{\psi_k}\;
	\Lambda(t)\circ \Lambda(\tau)\big[
	\ketbra{\psi_{k'}}{\psi_{k'}} \ketbra{\psi_{x}}{\psi_{x}} \ketbra{\psi_{k'}}{\psi_{k'}} \big]  \;
	\ketbra{\psi_k}{\psi_k}\ket{\psi_{\tilde{x}}} \nonumber \\
	&&= \bra{\psi_{\tilde{x}}} \;
	\Delta\circ \Lambda(t)\circ \Lambda(\tau)\circ\Delta\left[
	\ketbra{\psi_{x}}{\psi_{x}} \right]   \; \ket{\psi_{\tilde{x}}}
	\\
	&&=\bra{\psi_{\tilde{x}}} \;
	\Delta\circ \Lambda(t)\circ\Delta\circ \Lambda(\tau)\circ\Delta\left[
	\ketbra{\psi_{x}}{\psi_{x}} \right]   \; \ket{\psi_{\tilde{x}}} \nonumber
	\\
&&=
\bra{\psi_{\tilde{x}}}\;\Lambda(t)\circ\Delta \circ\Lambda(\tau)\left[\ketbra{\psi_{x}}{\psi_{x}} \right]  \;\ket{\psi_{\tilde{x}}}. \nonumber
	\end{eqnarray}
	where in the first line we used that only the terms $\tilde{x}=k$,$x=k'$ are non-zero, and in the third line we used that 
	$\left\{\Lambda(t)\right\}_{t\in \mathbbm{R}^+}$ is NCGD.
	
	For the converse, we start with the assumption that the equality \eqref{eq:innn} holds for any $x,\tilde{x}, t, \tau \in \mathbbm{R}^+$. The statement then simply follows by the linearity of the propagators since $\Delta$ is a projection onto the span of $\{\ketbra{\psi_x}{\psi_x}\}_{x}$.
\end{proof}


We can now prove the Proposition 1.
\begin{proof} 
Using Eq.(3) of the main text and Eq.\eqref{eq:qhat} we have that
\begin{eqnarray}
 && Q^{\hat{X}}_{1|1}\left\{x,t|x_0,0\right\} - \sum_{y}   Q^{\hat{X}}_{1|1}(x, t-s | y, 0)  Q^{\hat{X}}_{1|1}(y, s | x_0, 0)  \nonumber\\
 &&=\bra{\psi_x}\Lambda(t) \left[\ketbra{\psi_{x_0}}{\psi_{x_0}}\right] \ket{\psi_x} - \sum_{y} \bra{\psi_x} \Lambda(t-s) \left[\ketbra{\psi_{y}}{\psi_{y}}\right] \ket{\psi_x}
* \bra{\psi_{y}} \Lambda(s) \left[\ketbra{\psi_{x_0}}{\psi_{x_0}}\right] \ket{\psi_{y}},  \nonumber
\end{eqnarray}
so that using the semigroup composition law $\Lambda(t)=\Lambda(t-s)\Lambda(s)$ and the resolution of the identity,
the first term in the previous expression can be written as
\begin{eqnarray}
&&\bra{\psi_x} \Lambda(t-s)\big[\Lambda(s)  \left[\ketbra{\psi_{x_0}}{\psi_{x_0}}\right] \big] \ket{\psi_x}  \nonumber\\
&&= \sum_{y, y'}\bra{\psi_x} \Lambda(t-s)\Big[\ketbra{\psi_{y}}{\psi_{y}}
\Lambda(s) \left[\ketbra{\psi_{x_0}}{\psi_{x_0}}\right] \ketbra{\psi_{y'}}{\psi_{y'}}\Big]\ket{\psi_x},  \nonumber\\
&&=  \sum_{y, y'}\bra{\psi_x} \Lambda(t-s)\left[\ket{\psi_{y}}\bra{\psi_{y'}}\right]\ket{\psi_x}
\bra{\psi_{y}}\Lambda(s) \left[\ketbra{\psi_{x_0}}{\psi_{x_0}}\right] \ket{\psi_{y'}},
\nonumber
\end{eqnarray}
so that the `diagonal terms' (with $y=y'$) cancel out with the second contribution and the violation of the homogeneous Chapman-Kolmogorov condition is given by
\begin{eqnarray}
&& Q^{\hat{X}}_{1|1}\left\{x,t|x_0,0\right\} - \sum_{y}   Q^{\hat{X}}_{1|1}(x, t-s | y, 0)  Q^{\hat{X}}_{1|1}(y, s | x_0, 0)  \nonumber\\
&&= \sum_{y\neq y'}\bra{\psi_x} \Lambda(t-s)\left[\ket{\psi_{y}}\bra{\psi_{y'}}\right]\ket{\psi_x}
\bra{\psi_{y}}\Lambda(s) \left[\ketbra{\psi_{x_0}}{\psi_{x_0}}\right] \ket{\psi_{y'}},
\end{eqnarray}
which implies that such difference is equal to 0 for any $x_0, x, t\geq s$ if and only if the Lindblad dynamics is NCGD, see Eq.\eqref{eq:cop2}.
\end{proof}

Now, Theorem 1 easily follows from the following Lemma.
\begin{lemma}
Consider a dichotomic observable $\hat{X}$ with values in $\left\{0,1\right\}$ and
such that the conditional probabilities $Q^{\hat{X}}_{1|1}\left\{x,t ;x_0, 0\right\}$ satisfy Eq.(6) of the main text,
then the correlation function $C_X(t,0)$ satisfies the LGtI
\begin{equation}\label{eq:lgti}
\left |2C_X(t,0) - C_X(2t,0)\right |  \leq \langle X(0) \rangle.
\end{equation}
\end{lemma}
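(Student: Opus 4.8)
The plan is to reduce the LGtI to a one-variable inequality for the return probability $q(t):=Q^{\hat{X}}_{1|1}\{1,t|1,0\}$. First I would simplify the correlation function: since $X$ takes only the values $0$ and $1$, the only non-vanishing term in $C_X(t,0)=\sum_{x,x_0}Q^{\hat{X}}_2\{x,t;x_0,0\}\,x x_0$ is the one with $x=x_0=1$, so $C_X(t,0)=Q^{\hat{X}}_2\{1,t;1,0\}$. Since the measurement at time $0$ dephases the state, $Q^{\hat{X}}_2\{1,t;1,0\}=Q^{\hat{X}}_1\{1,0\}\,Q^{\hat{X}}_{1|1}\{1,t|1,0\}$, and $Q^{\hat{X}}_1\{1,0\}=\langle X(0)\rangle$; hence $C_X(t,0)=\langle X(0)\rangle\,q(t)$ and likewise $C_X(2t,0)=\langle X(0)\rangle\,q(2t)$. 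This is the only place where the initial state enters, and only through its diagonal weight $\langle X(0)\rangle$, so the statement will hold for arbitrary $\rho(0)$. If $\langle X(0)\rangle=0$ both sides of Eq.\eqref{eq:lgti} vanish; otherwise it remains to show $|2q(t)-q(2t)|\le 1$.

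Next I would insert the hypothesis, Eq.(6) of the main text, with $s=t$ and both the conditioning and the conditioned outcome equal to $1$: $q(2t)=\sum_y Q^{\hat{X}}_{1|1}\{1,t|y,0\}\,Q^{\hat{X}}_{1|1}\{y,t|1,0\}$. Because $X$ is dichotomic the sum has only two terms; writing $a:=q(t)=Q^{\hat{X}}_{1|1}\{1,t|1,0\}$, $b:=Q^{\hat{X}}_{1|1}\{1,t|0,0\}$ and using the normalization $Q^{\hat{X}}_{1|1}\{0,t|1,0\}=1-a$, one obtains $q(2t)=a^2+b(1-a)$, so that $2q(t)-q(2t)=a(2-a)-b(1-a)$.

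Finally I would bound this expression using only $a,b\in[0,1]$. Since $b(1-a)\ge 0$ and $a(2-a)=1-(1-a)^2\le 1$, the upper bound $2q(t)-q(2t)\le 1$ follows; since $a(2-a)\ge 0$ and $b(1-a)\le 1-a$, the lower bound $2q(t)-q(2t)\ge -(1-a)\ge -1$ follows. Multiplying through by $\langle X(0)\rangle\ge 0$ then gives Eq.\eqref{eq:lgti}.

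I do not expect a genuine obstacle here: once $C_X$ has been rewritten through the conditional probabilities the argument is elementary algebra in $a$ and $b$. The only point needing a little care is the first step --- recognizing that, thanks to the dephasing induced by the first measurement, $C_X$ depends on $\rho(0)$ solely through $\langle X(0)\rangle$, so that the bound is a consequence of the Chapman--Kolmogorov property alone and does not require the initial state to be incoherent.
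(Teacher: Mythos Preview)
Your proof is correct and follows essentially the same route as the paper: reduce $C_X(t,0)$ to $Q^{\hat X}_2\{1,t;1,0\}=\langle X(0)\rangle\,q(t)$ using the $\{0,1\}$-valuedness, apply the Chapman--Kolmogorov relation at $s=t$ to expand $q(2t)$, and then bound the resulting quadratic expression in the two transition probabilities. Your explicit two-sided bound $-(1-a)\le a(2-a)-b(1-a)\le 1$ is in fact cleaner than the paper's somewhat terse ``is maximized by $1$'' argument.
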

\begin{proof}

First note that
$$
C_X(t,0) := \sum_{x, x_0 = 0,1} Q^{\hat{X}}_2\left\{ x, t;  x_0, 0\right\} x * x_0= Q^{\hat{X}}_2\left\{ 1, t;  1, 0\right\};
$$
since the dichotomic observable has values in $\left\{0,1\right\}$;
thus
\begin{eqnarray}
&& \left |2C_X(t,0) - C_X(2t,0)\right | =\left |2 Q^{\hat{X}}_2\left\{ 1, t;  1, 0\right\} - Q^{\hat{X}}_2\left\{ 1, 2t;  1, 0\right\} \right|  \nonumber\\
 &&= \left | 2Q^{\hat{X}}_{1|1}\left\{1,t |1, 0\right\}-Q^{\hat{X}}_{1|1}\left\{1,2t| 1,0\right\}\right| Q^{\hat{X}}_1\left\{1,0\right\}\nonumber\\
 &&=\left|2Q^{\hat{X}}_{1|1}\left\{1,t |1, 0\right\}-\sum_{x_kk=0,1}  Q^{\hat{X}}_{1|1}\left\{1,t |x_k, 0\right\} Q^{\hat{X}}_{1|1}\left\{x_k,t |1, 0\right\} \right| Q^{\hat{X}}_1\left\{1,0\right\},
\end{eqnarray}
which provides us with Eq.(\ref{eq:lgti}), since $\langle X(0) \rangle = Q^{\hat{X}}_1\left\{1,0\right\}$, while
$2Q^{\hat{X}}_{1|1}\left\{1,t |1, 0\right\}-\sum_{x_kk=0,1}  Q^{\hat{X}}_{1|1}\left\{1,t |x_k, 0\right\} Q^{\hat{X}}_{1|1}\left\{x_k,t |1, 0\right\}$ is maximized by 1 for 
$Q^{\hat{X}}_{1|1}\left\{1,t |1, 0\right\} =1$ or $Q^{\hat{X}}_{1|1}\left\{1,t |1, 0\right\} = 0$ and $Q^{\hat{X}}_{1|1}\left\{0,t |1, 0\right\} = 1$ (as
seen using $Q^{\hat{X}}_{1|1}\left\{1,t |0, 0\right\} = 1-Q^{\hat{X}}_{1|1}\left\{1,t |1, 0\right\}$).
\end{proof}

Note that Lemma 2 holds independently of whether the conditional probabilities
are referring to the quantum setting (and hence are defined as in Eq.(1) of the main text)
or are directly referring to a classical theory: our proof
goes along the same line as that in \cite{Lambert2010},
simply adapting it to (possibly) quantum conditional probabilities.

\section{Proof of Theorem 2 and Theorem 3.}\label{app:semi2}

Before presenting the proof to Theorem 2, let us give the basic idea behind it.
The time-homogeneous Chapman-Kolmogorov equation holds for any classical time-homogeneous Markov
process, that is, Markovianity and classicality imply Chapman-Kolmogorov; note
that the time-homogeneity of the statistics holds, as a consequence of (2)M and the Lindblad
dynamics [see Eq.\eqref{eq:qhat}].
For the converse, we can exploit the definition of jM, which provides us with a notion of Markovianity beyond classical processes,
i.e., for any quantum statistics.
As said, the Markov property (both for classical
and nonclassical statistics) connects the multi-time probability distributions to the initial one-time distribution
and the conditional probability $Q^{\hat{X}}_{1|1}$; as a direct consequence of this, it is then easy to see that, if the Chapman-Komogorov equation holds,
jM directly turns into jCL. Theorem 2 thus follows from the equivalence established in Proposition 1. 

Explicitly, both the Theorems 2 and 3 directly follow from the following Lemma.

\begin{lemma} 
Given a non-degenerate observable 
$\hat{X} = \sum_{x} x \ketbra{\psi_x}{\psi_x}$ and a jM hierarchy of probabilities,
the latter defines a jCL statistics for any initial diagonal state $\rho(0) = \sum_{x_0} p_{x_0} \ketbra{\psi_{x_0}}{\psi_{x_0}}$ 
if and only if Eq.(6) of the main text holds for the quantum conditional probability $Q^{\hat{X}}_{1|1}\left\{x,t|x_0,0\right\}$.
\end{lemma}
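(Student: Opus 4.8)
The plan is to reduce the equivalence to two structural facts that are already at our disposal. First, a jM hierarchy factorizes through its one-step transition probabilities, i.e. for every $n\le j$ one has the product form of Eq.\eqref{eq:eq}, $Q^{\hat{X}}_n\{x_n,t_n;\ldots;x_1,t_1\}=\big(\prod_{k=1}^{n-1}Q^{\hat{X}}_{1|1}\{x_{k+1},t_{k+1}|x_k,t_k\}\big)Q^{\hat{X}}_1\{x_1,t_1\}$. Second, since the dynamics is Lindblad the transition probabilities are time-homogeneous, $Q^{\hat{X}}_{1|1}\{x,t|y,s\}=Q^{\hat{X}}_{1|1}\{x,t-s|y,0\}$ [Eq.\eqref{eq:qhat}]. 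To these I add one elementary observation: for a diagonal initial state $\rho(0)=\sum_{x_0}p_{x_0}\ketbra{\psi_{x_0}}{\psi_{x_0}}$, linearity of Eq.\eqref{eq:prec} gives $Q^{\hat{X}}_1\{x_1,t_1\}=\sum_{x_0}p_{x_0}Q^{\hat{X}}_{1|1}\{x_1,t_1|x_0,0\}$. With these in place both implications follow by inserting the factorization into the Kolmogorov marginalization sums.

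For ``Eq.(6) of the main text $\Rightarrow$ jCL'' I would fix $n\le j$ and an index $x_k$ to be summed out of $Q^{\hat{X}}_n$, and distinguish three cases. If $k=n$, only the last transition factor depends on $x_n$, and $\sum_{x_n}Q^{\hat{X}}_{1|1}\{x_n,t_n|x_{n-1},t_{n-1}\}=1$ by normalization, leaving exactly the factorized $Q^{\hat{X}}_{n-1}$ with $(x_n,t_n)$ dropped (which is itself jM since $n-1<n\le j$). If $1<k<n$, precisely the two adjacent factors $Q^{\hat{X}}_{1|1}\{x_{k+1},t_{k+1}|x_k,t_k\}$ and $Q^{\hat{X}}_{1|1}\{x_k,t_k|x_{k-1},t_{k-1}\}$ involve $x_k$; rewriting them via time-homogeneity and applying Eq.(6) collapses their sum to the single factor $Q^{\hat{X}}_{1|1}\{x_{k+1},t_{k+1}|x_{k-1},t_{k-1}\}$, reproducing $Q^{\hat{X}}_{n-1}$ with $(x_k,t_k)$ removed. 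If $k=1$, the $x_1$-dependent factors are the first transition factor and $Q^{\hat{X}}_1\{x_1,t_1\}$; expanding the latter over the diagonal state and using Eq.(6) once more gives $\sum_{x_1}Q^{\hat{X}}_{1|1}\{x_2,t_2|x_1,t_1\}Q^{\hat{X}}_1\{x_1,t_1\}=\sum_{x_0}p_{x_0}Q^{\hat{X}}_{1|1}\{x_2,t_2|x_0,0\}=Q^{\hat{X}}_1\{x_2,t_2\}$, the initial factor of $Q^{\hat{X}}_{n-1}$. Collecting the cases yields the $n$-time Kolmogorov condition for all $n\le j$, i.e. jCL; for $n=2$ only the $k=1$ and $k=n$ cases arise.

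For the converse I would specialize, for a fixed but arbitrary $x_0$, to the initial state $\rho(0)=\ketbra{\psi_{x_0}}{\psi_{x_0}}$, so that $Q^{\hat{X}}_1\{x_1,t_1\}=Q^{\hat{X}}_{1|1}\{x_1,t_1|x_0,0\}$, and invoke the $n=2$ Kolmogorov condition $\sum_{x_1}Q^{\hat{X}}_2\{x_2,t_2;x_1,t_1\}=Q^{\hat{X}}_1\{x_2,t_2\}$, which is contained in jCL for any $j\ge2$. Writing $Q^{\hat{X}}_2$ in its factorized form and using time-homogeneity with $t_1=s$, $t_2=t$ turns this identity verbatim into Eq.(6), $Q^{\hat{X}}_{1|1}\{x,t|x_0,0\}=\sum_y Q^{\hat{X}}_{1|1}(x,t-s|y,0)Q^{\hat{X}}_{1|1}(y,s|x_0,0)$. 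Theorem 2 then follows by chaining this Lemma with Proposition 1, and Theorem 3 by chaining it (at $j=2$) with Lemma 2.

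The step I expect to be the real pinch point — or rather the one place where the hypothesis is genuinely used — is the $k=1$ boundary term above: the whole argument hinges on representing $Q^{\hat{X}}_1$ as a convex combination of the transition probabilities $Q^{\hat{X}}_{1|1}\{\cdot\,|x_0,0\}$, and this identity fails as soon as $\rho(0)$ carries coherences in the $\hat{X}$-eigenbasis, which is exactly why ``for any initial diagonal state'' cannot be dropped. The interior ($1<k<n$) and right-endpoint ($k=n$) cases are routine manipulation with the jM factorization plus, respectively, a single use of Chapman--Kolmogorov and of normalization, so I anticipate no further obstruction there.
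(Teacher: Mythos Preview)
Your proof is correct and follows essentially the same route as the paper's: both directions rest on the jM factorization Eq.\eqref{eq:eq}, time-homogeneity Eq.\eqref{eq:qhat}, and the diagonal-initial-state expansion of $Q^{\hat{X}}_1$, with Chapman--Kolmogorov collapsing adjacent transition factors in the ``if'' direction and the $n=2$ Kolmogorov condition yielding Chapman--Kolmogorov in the ``only if'' direction. The only cosmetic difference is that in the converse you specialize directly to the pure eigenstate $\rho(0)=\ketbra{\psi_{x_0}}{\psi_{x_0}}$, whereas the paper keeps a generic diagonal $\rho(0)$ and then invokes arbitrariness of the weights $p_{x_0}$; these are the same argument.
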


\begin{proof}
``Only if": the statistics is, in particular, 2CL, so that we have, for any $x, t\geq s\in \mathbbm{R}^+$,
$$
\sum_{y} Q^{\hat{X}}_{2}\left\{x,t;y,s\right\} =  Q^{\hat{X}}_{1}\left\{x,t\right\}.
$$
But then, using the definition of conditional probability $ Q^{\hat{X}}_{2}\left\{x,t;y,s\right\} =  Q^{\hat{X}}_{1|1}\left\{x,t|y,s\right\}*
 Q^{\hat{X}}_{1}\left\{y,s\right\}$ and, crucially, the time-homogeneity guaranteed by the 2M and the Lindblad equation (see Eq.\eqref{eq:qhat}), we can write
 $$
 \sum_{y}  Q^{\hat{X}}_{1|1}\left\{x,t-s|y,0\right\}*Q^{\hat{X}}_{1}\left\{y,s\right\}=  Q^{\hat{X}}_{1}\left\{x,t\right\}.
 $$
Using the Kolmogorov condition, this time w.r.t. the initial value,
and the definition of conditional probability, 
the previous relation gives
 $$
 \sum_{x_0}Q^{\hat{X}}_{1}\left\{x_0,0\right\} \left( \sum_{y} Q^{\hat{X}}_{1|1}\left\{x,t-s|y,0\right\}*Q^{\hat{X}}_{1|1}\left\{y,s|x_0,0\right\}- Q^{\hat{X}}_{1|1}\left\{x,t|x_0,0\right\}\right) = 0,
 $$
which directly provides us with the Chapman-Kolmogorov composition law in Eq.(6) of the main text, since, by assumption, 
we can choose any initial diagonal state $\rho(0) = \sum_{x_0} p_{x_0} \ketbra{\psi_{x_0}}{\psi_{x_0}}$,
and then any distribution of $Q^{\hat{X}}_{1}\left\{x_0,0\right\} = p_{x_0}$.\\ 

``If": Eq.(6) of the main text for the quantum conditional probability $Q^{\hat{X}}_{1|1}\left\{x,t|x_0,0\right\}$ means that
\begin{eqnarray}
  \mbox{tr}\left\{\mathcal{P}_x e^{\mathcal{L} t} \left[\ketbra{\psi_{x_0}}{\psi_{x_0}}\right]\right\} = 
 \sum_{x_k}  \mbox{tr}\left\{\mathcal{P}_x e^{\mathcal{L} (t-s)} \left[\ketbra{\psi_{x_k}}{\psi_{x_k}}\right]\right\}
 \mbox{tr}\left\{\mathcal{P}_{x_k} e^{\mathcal{L} s} \left[\ketbra{\psi_{x_0}}{\psi_{x_0}}\right]\right\} \nonumber\\ 
 \forall x_0, x; t\geq s \in \mathbbm{R}^+,\nonumber
\end{eqnarray}
which, replaced in Eq.\eqref{eq:eq}, implies Eq.(2) of the main text [$s \mapsto t_k-t_{k-1}, x_0\mapsto x_{k-1}, t \mapsto t_{k+1}-t_{k-1}, x \mapsto x_{k+1}$],
so that if the hierarchy is jM it will also be jCL; note that this is the case also for $k=1$ since we assume the initial state to be diagonal in the selected basis.
\end{proof}

Theorem 2 hence directly follows from Lemma 3 and Proposition 1, while Theorem 3 follows, e.g., from Lemmas 2 and 3.

\section{Difference between the various types of incoherent operations}\label{app:ch}

Let us start by presenting a very simple example of an evolution which is CGD, i.e., which can generate and detect coherence
[see Eq.(5) of the main text] and which, in addition, allows us to connect
such property with nonclassicality in an easy way.
Hence, consider a two-level system, initially in the eigenstate $\ket{-1}$ of $\hat{\sigma}_z$
(where $\left\{\ket{\ell}\right\}_{\ell=-1,1}$ is the eigenbasis of the operator $\hat{\sigma}_z$)
and evolving under the unitary operators
\begin{equation}\label{novel}
U(t) = e^{-i \hat{\sigma}_y t}.
\end{equation}
Indeed, this evolution can generate coherence with respect to $\hat{\sigma}_z$ at an intermediate time, i.e., there are times $t_1>0$
when $|\bra{\pm1}U(t_1)\ket{-1}\bra{-1}U^{\dag}(t_1)\ket{\mp1}| \neq 0$, and can convert such coherence
into populations, i.e. there are times $t_2$ such that $\bra{1}U(t_2-t_1)\ket{\mp1}\bra{\pm1}U^{\dag}(t_2-t_1)\ket{1} \neq 0$,
which implies that the condition in Eq.(5) of the main text is satisfied: the evolution is CGD; see
also Fig.\ref{fig:examplencgdcoherent} {\bf a)}.
An important point of this example is that, since the evolution is unitary, the QRT is automatically guaranteed
[i.e, Eq.(1) and Eq.(4) of the main text coincide], so that
the two-time statistics associated with measurements of the observable $\hat{\sigma}_z$
is simply given by
$$
Q_2^{\hat{\sigma}_z}\left\{1, t_2; x, t_1\right\} = \left|\bra{1}U(t_2-t_1)\ket{x}\right|^2
\left|\bra{x}U(t_1)\ket{-1}\right|^2,
$$
with $x=-1,1$. It is then easy to see that the CGD condition implies
$\sum_x Q_2^{\hat{\sigma}_z}\left\{1, t_2; x, t_1\right\} \neq Q_1^{\hat{\sigma}_z}\left\{1, t_2\right\} =\left|\bra{1}U(t_2)\ket{-1}\right|^2$,
so that the multitime statistics is nonclassical (see Definition 1 in the main text):
the coherence generated at intermediate time $t_1$ and turned into population at time $t_2$
directly implies a violation of the Kolmogorov condition, i.e., nonclassicality.
This is not surprising, since, by virtue of the Theorem 2, we know that 
nonclassicality and CGD of the system's (nondegenerate) observables coincide,
whenever the QRT holds.
If this is not the case, the correspondence between nonclassicality and CGD 
is no longer guaranteed, as we will exemplify in the next Section.

As mentioned in the main text, it is trivial to see that the maximally incoherent operations (MIO)~\cite{aaberg2014catalytic} and the coherence nonactivating maps~\cite{Liu2017} are subsets of the NCGD maps. What we show here is that they are strict subsets, by giving an explicit example. Consider the
completely positive and trace preserving map
acting on a basis of linear operators on $\mathbbm{C}^2$ as
\begin{equation}\label{ex:mapncgdnotinc}
 \Lambda: 
\begin{array}{cc}
\left(
\begin{array}{cc}
1 & 0 \\
0& 0 \\
\end{array}
\right) & \left(
\begin{array}{cc}
0 & 1 \\
0 & 0 \\
\end{array}
\right) \\
\left(
\begin{array}{cc}
0 & 0 \\
1 & 0 \\
\end{array}
\right) & \left(
\begin{array}{cc}
0 & 0 \\
0 & 1 \\
\end{array}
\right) \\
\end{array}
\mapsto
\begin{array}{cc}
\left(
\begin{array}{cc}
0.996 & -0.003 i \\
0.003 i & 0.004 \\
\end{array}
\right) & \left(
\begin{array}{cc}
0.003 & 0.99 \\
0 & -0.003 \\
\end{array}
\right) \\
\left(
\begin{array}{cc}
0.003 & 0 \\
0.99 & -0.003 \\
\end{array}
\right) & \left(
\begin{array}{cc}
0.004 & 0.003 i \\
-0.003 i & 0.996 \\
\end{array}
\right) \\
\end{array}.
\end{equation}
The map is NCGD, while it  both creates coherence and also is able to detect it. Explicitly:
\begin{equation}\label{eq:NCGD}
	\left\|
		\Delta \circ \Lambda \circ \Lambda \circ \Delta -\Delta \circ\Lambda \circ \Delta \circ \Lambda\circ \Delta
		\right\|_{\infty}=0,
\end{equation}
\begin{equation}\label{eq:INC}
\left\|
\Lambda \circ \Delta -\Delta \circ \Lambda\circ \Delta
\right\|_{\infty}=0.003,
\end{equation}
\begin{equation}\label{eq:CNA}
\left\|
\Delta \circ \Lambda-\Delta \circ \Lambda\circ \Delta
\right\|_{\infty}=0.003,
\end{equation}
where $\| \Lambda\|_\infty$denotes the
infinity norm of the $4 \times 4$ matrix
given by the action of $\Lambda$
on the basis of operators on $\mathbbm{C}^2$;
recall that the infinity norm is the
maximum among the absolute sums of the columns. 
Indeed, the same is true for applying the map multiple times: 
the NCGD condition remains fulfilled, while the above norm increases to over 0.12, as shown in Fig.~\ref{fig:examplencgdcoherent} {\bf c)}.

\begin{figure}[h]
	\centering
	{\bf (a)}\hskip5.9cm {\bf (b)}\hskip5.9cm{\bf (c)}\\
	\vspace{0.2cm}
	\includegraphics[width=0.25\linewidth]{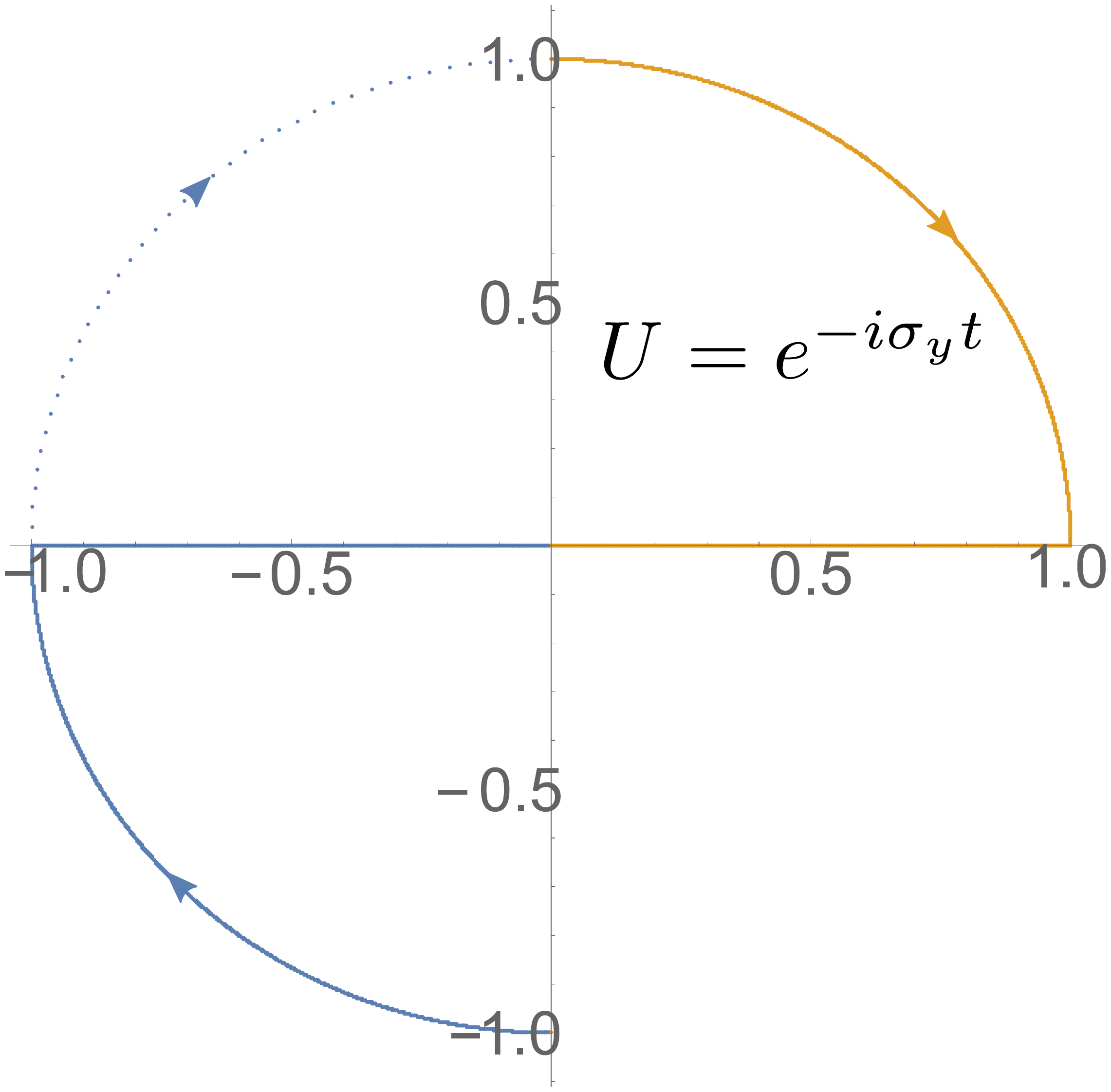}\hspace{0.05\linewidth}
	\includegraphics[width=0.27\linewidth]{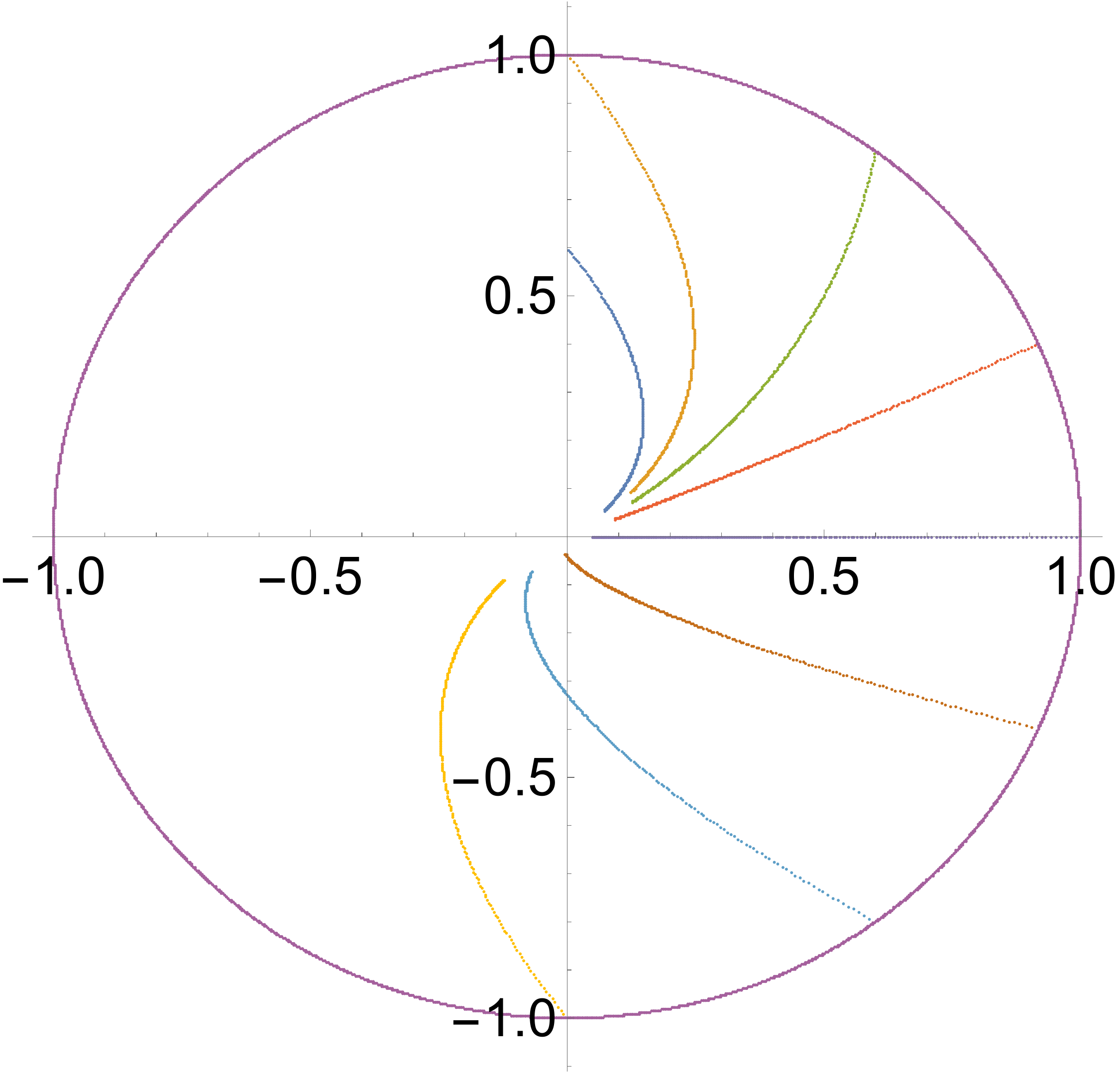}\hspace{0.05\linewidth}\includegraphics[width=0.33\linewidth]{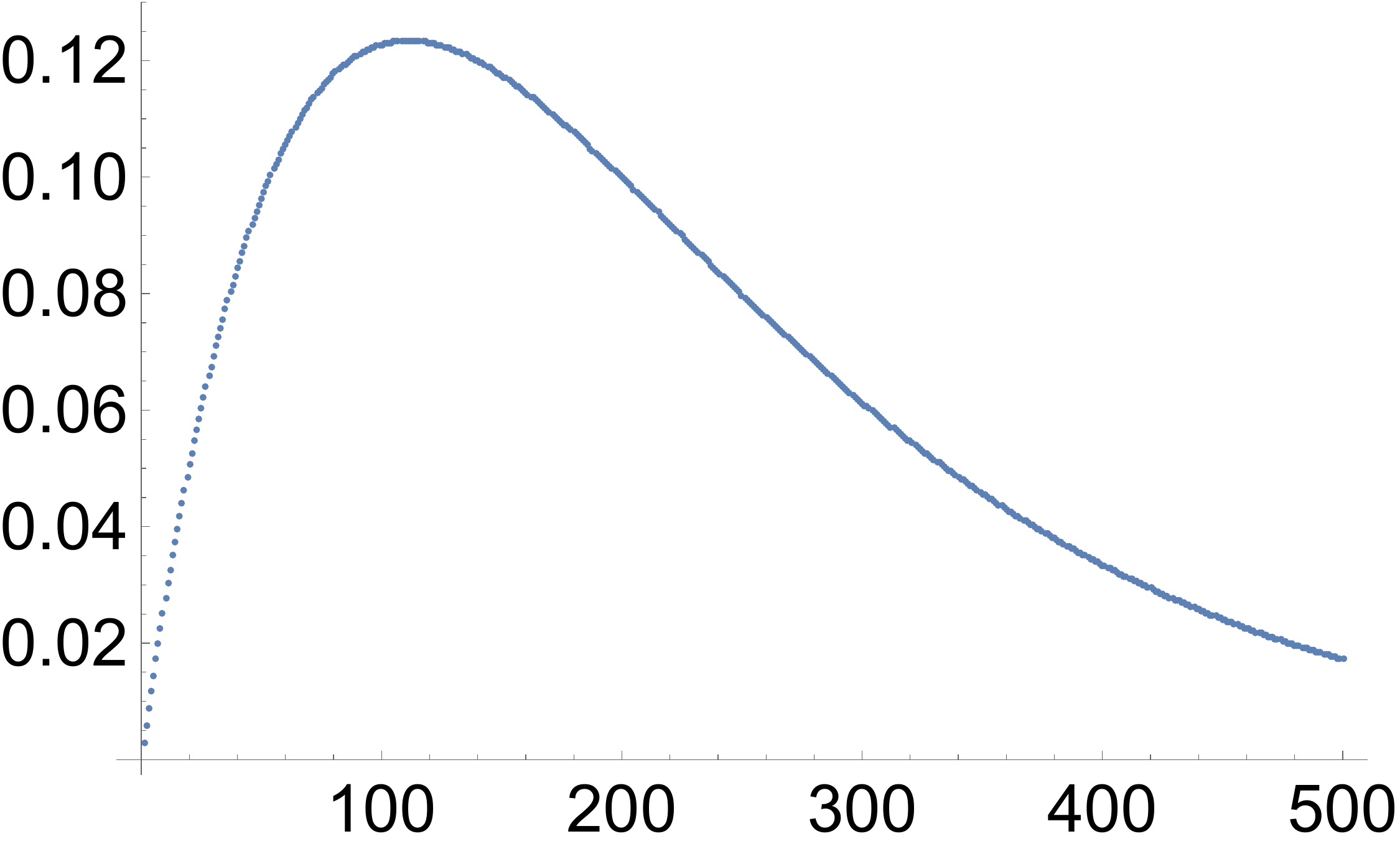}
	\caption{{\bf a)} Example of an evolution fixed by Eq.\eqref{novel} and by a possible measurement of $\hat{\sigma}_z$
	at the intermediate time $t_1=\pi/4$: starting from $\ket{-1}$, the system's state
	follows the solid blue line until $t_1$. If no measurement is performed,
	the system will then follow the dotted blue line, completely turning the coherence with respect to $\hat{\sigma}_z$
	generated at $t_1$ into the population $\ket{1}$ at $t_2=\pi/2$. On the other hand,
	if a measurement at $t_1$ is performed, the state will be projected to $\ket{1}$ or $\ket{-1}$,
	with probability $1/2$ each; after that, the state will continue
	its evolution along the solid orange or blue line, respectively, so that a second measurement at time $t_2=\pi/2$
	would still project the system's state to $\ket{1}$ or $\ket{-1}$,
	with equal probabilities $1/2$. But then, we clearly have 
	$1/2=\sum_{x=\pm1} Q_2^{\hat{\sigma}_z}\left\{1, \pi/2; x,\pi/4 \right\} \neq Q_1^{\hat{\sigma}_z}\left\{1, \pi/2\right\}=1$,
	i.e., the multitime statistics is nonclassical.
	 {\bf b)} Evolution of seven states on the XZ plane of the Bloch sphere under 1 to 300 applications of a map (see Eq.~\eqref{ex:mapncgdnotinc}) that is NCGD, but neither coherent nor coherence nonactivating. {\bf c)} 
	Norm of the difference between the coherent map in 
	Eq.~\eqref{ex:mapncgdnotinc} and the incoherent one (see Eq.~\eqref{eq:INC}) as a function of the number of applications. In this example this is the same as the norm of the difference between the coherence nonactivating and the actual evolution (see Eq.~\eqref{eq:CNA}).}
	\label{fig:examplencgdcoherent}
\end{figure}

\section{More details on the examples of NM multi-time statistics}\label{app:ex}
In this Section, we provide the explicit calculations for the examples discussed in the main text.
Let us emphasize that the model taken into account is very similar to the photonic system considered in \cite{Liu2013}
to realize experimentally the transition between a Markovian and a non-Markovian dynamics, and further exploited in \cite{Guarnieri2014}
to investigate the link between the dynamical notions of Markovianity and the QRT for system's observables (rather than for the projective maps resulting
from ideal selective measurements, as done here). There, the open system is given by the polarization degree of freedom
of a photon, while the continuous degree of freedom represents the frequency of the photon, $\omega \in \mathbbm{R}^+$. Here, instead, we
are taking into account a continuous degree of freedom, say the particle's momentum, which can take on negative values, $p \in \mathbbm{R}$,
for a reason which will become clear in a moment.
Furthermore, essentially the same model (with coupling to position)
has been studied in the context of dynamical decoupling in \cite{Arenz2015}.

Before proceeding, let us introduce more general definitions, which can be applied also to the non-Lindblad
dynamics encountered in the following examples (in particular,
in the second one, referring to an initial momentum distribution given by the superposition of two Gaussians).
To do so, we need the notion of propagators of the dynamics, that is, the maps $\Lambda(t,s)$ (not necessarily completely positive, neither positive)
such that 
\begin{equation}\label{eq:prop}
\Lambda(t) = \Lambda(t,s) \Lambda(s),
\end{equation}
for any $t \geq s \in \mathbbm{R}^+$,
so that $\rho(t) = \Lambda(t,s)[\rho(s)]$.
For the sake of simplicity, we assume that the dynamics is divisible \cite{Breuer2016}, i.e., that the propagators can be defined as in Eq.\eqref{eq:prop}
for any $t\geq s\in \mathbbm{R}^+$; indeed, this is always the case for a Lindbladian dynamics,
where the propagators are defined by $\Lambda(t,s)=\Lambda(t-s,0)=e^{\mathcal{L}(t-s)}$. 

We say that the multi-time statistics is j-Markovian if it satisfies the QRT theorem with respect to the dynamical maps and the corresponding
propagators, i.e., if 
\begin{eqnarray}\label{eq:LQRTn}
&& Q^{\hat{X}_S}_n\left\{x_n, t_n \ldots x_1,t_1\right\}   = \mbox{tr}_S\left\{\mathcal{P}_{x_n}\Lambda(t_n,t_{n-1})\ldots \mathcal{P}_{x_1}\Lambda(t_1)\rho_S(0)\right\},
\end{eqnarray}
for any $n\leq j, x_1, \ldots x_n, t_n \geq \ldots t_1$.
Of course, this definition reduces to that in Eq.(4) of the main text if we consider a Lindblad dynamics.
The validity of the QRT and a Lindblad dynamics ensure that the statistics is 
time-homogeneous, i.e., that
the two-time conditional probabilities satisfy 
$Q^{\hat{X}}_{1|1}\left\{x,t|y,s\right\} = Q^{\hat{X}}_{1|1}\left\{x,t-s|y,0\right\}$. 
Hence, we can see the definition in Eq.\eqref{eq:LQRTn} as the general definition of Markovian, possibly time-inhomogeneous
statistics, while the definition in Eq.(4) of the main text corresponds to the Markovian time-homogeneous case.

Finally, we say that
a (divisible) dynamics $\left\{\Lambda(t)\right\}_{t\in \mathbbm{R}^+}$,
with propagators $\Lambda(t,s)$ is CGD whenever
there exist instants $t\geq s \geq r\in \mathbbm{R}^+$ such that
	\begin{equation}\label{eq:dellam2}
	\Delta \circ \Lambda(t,s)\circ \Delta \circ \Lambda(s,r)\circ\Delta \neq \Delta \circ \Lambda(t,r)\circ\Delta;
	 \end{equation}
once again, this reduces to the definition given in the main text [Eq.(5)] for a Lindblad dynamics.	

\subsection{General expressions}
Given the unitary
\begin{equation}
U(t) \ket{\ell, p} = e^{i \ell p t} \ket{\ell, p}
\end{equation}
(where, as in the main text, $\left\{\ket{\ell}\right\}_{\ell=-1,1}$ is the eigenbasis of the system operator $\hat{\sigma}_z$),
one can straightforwardly evaluate the exact expression of the multi-time joint probability distribution, 
as given by Eq.(1) of the main text, as well as that provided by the QRT, see Eq.\eqref{eq:LQRTn}, i.e.,
the one which is given by a Markovian description of the multi-time statistics.
For the sake of simplicity, we focus on the two-time statistics and we take as initial condition $\rho(0) = \ket{+}\bra{+}\otimes \ket{\varphi_E}\bra{\varphi_E}$, with
$
\ket{\varphi_{E}} = \int_{-\infty}^{\infty} d p f(p) \ket{p}$ ($\int_{-\infty}^{\infty} d p |f(p)|^2 =1$). It is useful to define $k(t) = \int_{-\infty}^{\infty} d p |f(p)|^2 e^{2 i p t}$.
Now, let us also fix that both the first and the second outcomes of the measurement of $\hat{\sigma}_x$
yield the same result, $+$, so that we have
\begin{eqnarray}\label{eq:nonqrt}
 Q_2^{\hat{\sigma}_x}\{+,t;+,s\}&=&\mbox{tr}_E\{\bra{+}{\mathcal{U}}(t-s)[\ket{+}\bra{+}{\mathcal{U}}(s)[\ket{+}\bra{+}\otimes \ket{\varphi_E}\bra{\varphi_E}]\ket{+}\bra{+}]\ket{+} \} \nonumber\\
 &=& \frac{1}{4}\Re{\left[\frac{1}{2}k(t-2s)+\frac{1}{2}k(t)+k(t-s)+k(s)+1\right]},
\end{eqnarray}
where $\Re$ denotes the real part.
Moreover, since the map and the propagator of the above dynamics are given by
	\begin{equation}
	\Lambda(t)[\rho]=\left(
	\begin{array}{cc}
	\rho_{-1 -1} &  k(t)\rho_{-11}\\ 
	k^*(t)\rho_{1-1} & \rho_{11}
\end{array}
	\right)
	\end{equation}
	and
	\begin{equation}
	\Lambda(t,s)[\rho]=\Lambda(t)\circ\Lambda^{-1}(s)[\rho]=\left(
	\begin{array}{cc}
	\rho_{-1-1} &  \frac{k(t)}{k(s)}\rho_{-11}\\ 
	\frac{k^*(t)}{k^*(s)}\rho_{1-1} & \rho_{11}
	\end{array}
	\right),
	\end{equation}
we get, see Eq.\eqref{eq:LQRTn},
\begin{eqnarray}
Q_{2M}^{\hat{\sigma}_x}\{+,t;+,s\}&=&\bra{+}\Lambda(t,s)\Big[\ket{+}\bra{+}\Lambda(s)[\ket{+}\bra{+}]\ket{+}\bra{+}\Big]\ket{+} \nonumber\\
&=&\frac{1}{4}\Re\left[\frac{1}{2}k(t)\left(\frac{k^*(s)}{k(s)}+1\right)+\frac{k(t)}{k(s)}+k(s)+1\right].
 \end{eqnarray}
This means that the Markovian description implies $Q_2^{\hat{\sigma}_x}\{+,t;+,s\} = Q_{2M}^{\hat{\sigma}_x}\{+,t;+,s\}$, i.e.,
\begin{eqnarray}
\Re\left[\frac{1}{2}k(t)\frac{k^*(s)}{k(s)}+\frac{k(t)}{k(s)}\right] = \Re\left[\frac{1}{2}k(t-2s)+k(t-s)\right];
\end{eqnarray}
indeed, a violation of this condition will be enough to prove that the statistics is NM.

In order to check whether the Kolmogorov condition in Eq.(2) of the main text holds for the two-time probabilities,
we need also to evaluate the other two-time probability distribution
\begin{eqnarray}
 Q_2^{\hat{\sigma}_x}\{+,t;-,s\}&=& \mbox{tr}_E\{\bra{+}{\mathcal{U}}(t-s)\Big[\ket{-}\bra{-}{\mathcal{U}}(s)[\ket{+}\bra{+}\otimes \ket{\varphi_E}\bra{\varphi_E}]\ket{-}\bra{-}\Big]\ket{+} \} \nonumber\\
 &=&\frac{1}{4}\Re[\frac{1}{2}k(t-2s)+\frac{1}{2}k(t)-k(t-s)-k(s)+1],
\end{eqnarray}
as well as the one-time probability
\begin{eqnarray}
Q_1^{\hat{\sigma}_x}\{+,t\}&=&\mbox{tr}_E\{\bra{+}{\mathcal{U}}(t)[\ket{+}\bra{+}\otimes \ket{\varphi_E}\bra{\varphi_E}]\ket{+} \}\nonumber \\
&=&\Re[\frac{1}{2}+\frac{1}{2}k(t)].
\end{eqnarray}
Hence, setting $K_\pm(t,s)\equiv Q_2^{\hat{\sigma}_x}\{\pm,t;+,s\}+Q_2^{\hat{\sigma}_x}\{\pm,t;-,s\}-Q_1^{\hat{\sigma}_x}\{\pm,t\}$,
the statistics is 2-CL if and only if 
\begin{eqnarray}
K_+(t,s) = \frac{1}{4}\Re[k(t-2s)-k(t)]=0;
\end{eqnarray}
note that, since $ Q_2^{\hat{\sigma}_x}\{-,t;\pm,s\} = Q_1^{\hat{\sigma}_x}\{\pm,s\}- Q_2^{\hat{\sigma}_x}\{+,t;\pm,s\}$, one has $K_-(t,s)=- K_+(t,s)$.

Finally, since we are interested in the connection between classicality and coherences, 
we want to check whether the dynamics is (N)CGD.
With $\Delta$ defined with respect to the eigenbasis of $\hat{\sigma}_x$, we have
 \begin{eqnarray}
 \|&&(\Delta\circ\Lambda(t,s)-\Delta\circ\Lambda(t,s)\circ\Delta)\Lambda(s)\Delta\|_\infty
 = \frac{ 2\left|\Im\left[k(s)\right] \Im\left[k^*(s) k(t)\right]\right|}{4 |k(s)|^2} \label{eq:realk}
  \end{eqnarray}
where $\Im$ denotes the imaginary part. 
For the sake of simplicity, 
we set $r=0$ [compare with the general definition in Eq.\eqref{eq:dellam2}], which is in any case enough to detect CGD.

\subsection{Lorentzian distribution}
As said in the main text, for a Lorentzian distribution
\begin{equation}
|f(p)|^2 = \frac{\Gamma}{\pi(\Gamma^2+(p-p_0)^2)},
\end{equation}
the decoherence function simply reduces to 
\begin{equation}
k(t)= e^{2 i p_0 t} e^{- 2 \Gamma t}.
\end{equation}
The latter implies a Lindblad pure dephasing dynamics \cite{Guarnieri2014},
\begin{equation}\label{eq:pudeme}
\frac{d}{dt}\rho(t) = - i p_0 \left[\hat{\sigma}_z, \rho(t)\right] + \Gamma \left(\hat{\sigma}_z \rho(t) \hat{\sigma}_z-\rho(t)\right),
\end{equation}
from which we can read the physical meaning of the two parameters, $p_0$ and $\Gamma$ defining the Lorentzian distribution in this context.

In particular, for $p_0 = 0$, we get
\begin{eqnarray}
Q^{\hat{\sigma}_x}_{2}\left\{+, t; +, s\right\} &=& \frac{1}{4}\left(1+e^{-2 \Gamma s}+e^{-2\Gamma(t-s)}+\frac{1}{2}e^{-2 \Gamma t}\right) \nonumber\\
&&+\frac{1}{8}\left( \cosh\left(2\Gamma(t-2s)\right) - \sinh\left(2 \Gamma |t-2s|\right)\right), \label{eq:qtre}
\end{eqnarray}
while the QRT gives us
\begin{eqnarray}
Q^{\hat{\sigma}_x}_{2M}\left\{+, t; +, s\right\} 
&=& \left\{  \bra{+}e^{\mathcal{L}(t-s)}\left[\ketbra{+}{+}e^{\mathcal{L} s}\left[\ketbra{+}{+}\right]\ketbra{+}{+}\right]\ket{+}\right\} \nonumber\\
&=& \frac{1}{4}\left(1+e^{-2\Gamma s}\right)\left(1+e^{-2\Gamma (t-s)}\right). \label{eq:qqrt}
\end{eqnarray}
As shown in Fig.1{\bf a)} of the main text, these two functions are clearly different, implying that the present statistics is NM,
since the QRT is not satisfied \cite{footboh}.
In addition, the statistics is not even classical, as follows from
\begin{equation}\label{eq:ineq}
\sum_y Q^{\hat{\sigma}_x}_{2}\left\{1, t; y, s\right\} \neq Q^{\hat{\sigma}_x}_{1}\left\{1, t\right\},
\end{equation}
which can be easily shown since one has
$$
\partial_s \sum_y Q^{\hat{\sigma}_x}_{2}\left\{+, t; y, s\right\} = \Gamma \mbox{sgn}\left\{t-2s\right\}e^{-2 \Gamma |t-2s|},
$$
which is of course different from 0, thus guarantying the inequality in Eq.\eqref{eq:ineq}.
For $p_0=0$ the model is furthermore NCGD: Eq.(5) of the main text does not hold for any choice of times and states
spanned by the reference basis. As we have here pure dephasing in the $z$-direction, coherences in the $x$-direction cannot
be even generated.
This example clearly shows how the non-classicality of a NM statistics
might be fully unrelated even from the presence itself of quantum coherence in the dynamics.

The main reason behind that is, as said, the irreducible complexity of the hierarchy of joint probability distributions,
so that two-time probabilities cannot be generally inferred from one-time probabilities, even if the latter follow a homogeneous (Lindblad) \emph{dynamics}.
Note that for the same model, if we consider as initial state of the system the totally mixed state,
$\rho(0)= (\ketbra{+}{+}+\ketbra{-}{-})/2$, one has $Q^{\hat{\sigma}_x}_{2}\left\{x, t; y, s\right\}=Q^{\hat{\sigma}_x}_{2}\left\{x, t-s; y, 0\right\}$,
which then satisfies the QRT, so that the statistics is 2M. 
Nevertheless, in this case the three-time probability distribution $Q^{\hat{\sigma}_x}_{3}$
would not satisfy the QRT  \cite{Lindblad1980,Accardi1982} (essentially, the state after the first selective measurement plays the role which was played by the initial state above):
one has here a statistics which is 2M, but not 3M.
We also note that, for a generic initial environmental 
distribution, such choice of $\rho(0)$ would lead to a 2CL, but not 3CL statistics.
We conclude that one cannot generally go from one-time probability distributions to two-time ones;
and even if one can go from one- to two-time probabilities, the three-time probability distributions might not be deducible from the lower ones.
Finally, we leave as an open question whether it is possible to find more elaborated examples where the statistics
can be $j$CL ($j$M), but not $j+1$CL ($j+1$M) for a generic $j$. In any
case, the whole hierarchies introduced in Definitions 1 and 2 of the main text are useful, both for
the practical reasons mentioned in the main text (one might be able to access $j$-time, but
not $j+1$-time statistics) and because one might want to quantify different degrees of 
non-classicality (non-Markovianity) in the multi-time statistics: even if 
the statistics is neither $j$CL nor $j+1$CL ($j$M nor $j+1$M), 
it can be of interest to speak of a stronger violation of, say, $j$CL rather than $j+1$CL
($j$M rather than $j+1$M), e.g., by comparing the different deviations from the corresponding Kolmogorov conditions in Eq.(2)
of the main text (the QRT in Eq.(4) of the main text).

\subsection{Superposition of Gaussians}
For a distribution given by the sum of two Gaussians
\begin{equation}
|f(p)|^2= \sum_{i=1,2} A_i e^{-\frac{(p-p_i)^2}{2 \sigma^2_i}}
\end{equation}
where $A_1=\frac{1}{\sqrt{2\pi}\sigma(1+A_\theta)}$, $A_2=A_\theta A_1$ and $\sigma_1=\sigma_2=\sigma$, the decoherence function reduces to
\begin{equation}
k(t)=\frac{e^{-2\sigma^2t^2}}{A_\theta+1}\left[e^{2 i p_1t}+A_\theta e^{2 i p_2t}\right].
\end{equation}
For the specific choice of parameters $A_\theta=\sigma=p_1=t=1$, $p_2=2p_1$, the functions $Q_2^{\hat{\sigma}_x}\{+,t;+,s\}$ and $Q^{\hat{\sigma}_x}_{2M}\left\{+, t; +, s\right\}$ are, in general, different. The present statistics is thus NM, as shown in Fig.\ref{fig:cgdcl} \textbf{a)}.
\begin{figure}[h]
	\centering
		\hspace{-7cm}{\bf\large{(a)}}\\
	\includegraphics[width=0.4\linewidth]{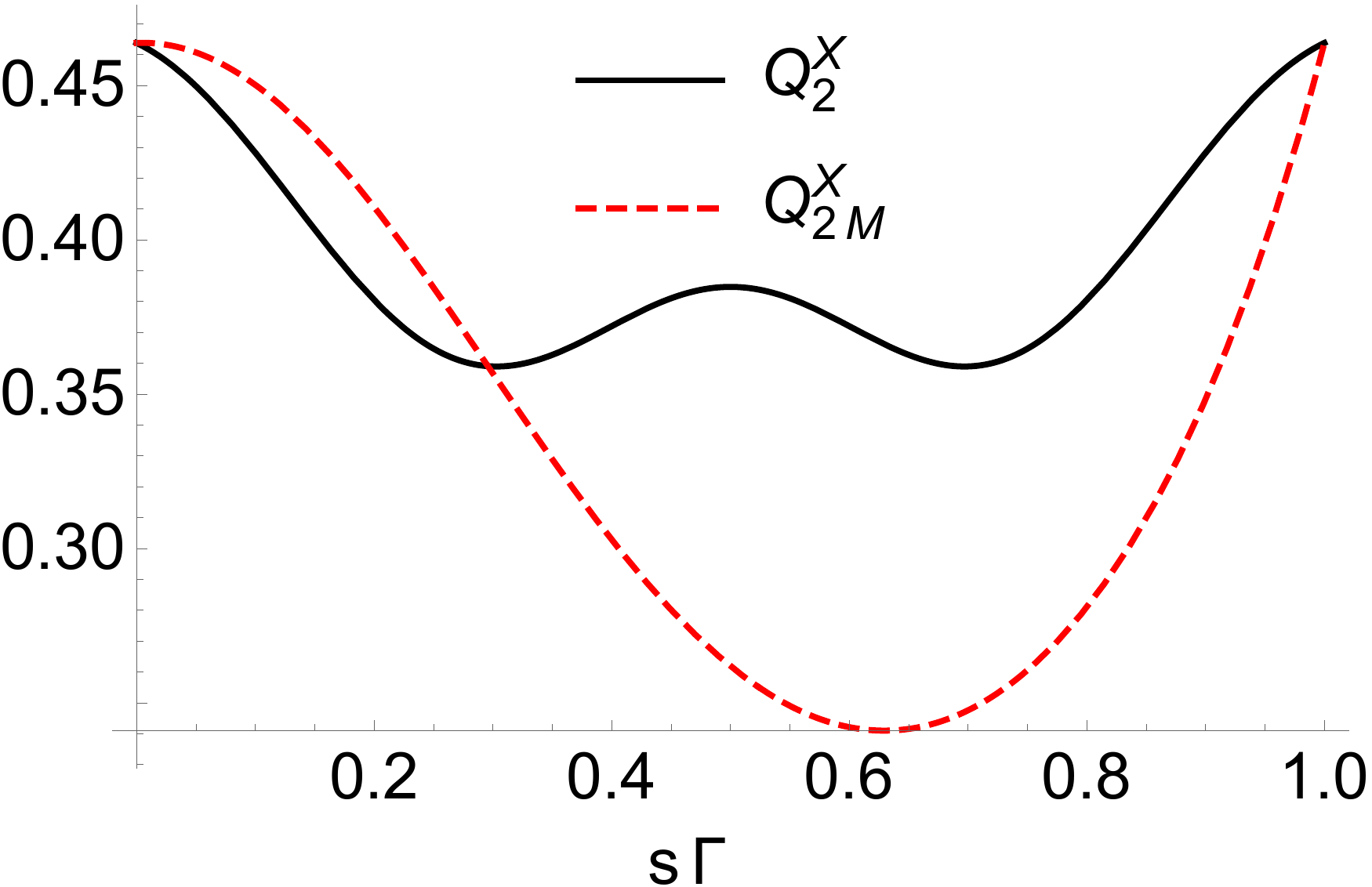}\hspace{0.1\linewidth}\includegraphics[width=0.4\linewidth]{Fig2b}
	\caption{\textbf{a)} Comparison of the two-time probability distribution as a function of the first time s, for
	the exact formula (solid, black line) and the 2M statistics (dashed, red line). \textbf{b)} Violation of 2CL ($K_+(t,s)$) (dashed, blue line) and violation of NCGD ($N(t,s)$) (solid, red line). For both plots the environmental distribution is given by a sum of two Gaussians, with (in arbitrary units) $A_\theta=\sigma=p_1=t=1$, $p_2=2p_1$, $s\in[0,t]$.}
	\label{fig:cgdcl}
\end{figure}

In order for the statistics to be 2-CL, the following condition must hold
\begin{eqnarray}
K_+(t,s) &=& \frac{1}{4}\Re[k(t-2s)-k(t)]\nonumber\\
 &=& \frac{e^{-2\sigma^2(t-2s)^2}}{4(A_\theta+1)}\left[\cos(2p_1(t-2s))+A_\theta\cos(2p_2(t-2s))\right]\nonumber
 \\&&+\frac{e^{-2\sigma^2t^2}}{4(A_\theta+1)}\left[\cos(2p_1t)+A_\theta\cos(2p_2t)\right]=0.
\end{eqnarray}

Furthermore, the model is CGD if [see Eq.\eqref{eq:realk}]
the quantity
\begin{eqnarray}
  &&N(t,s):= \frac{ 2\left|\Im\left[k(s)\right] \Im\left[k^*(s) k(t)\right]\right|}{4 |k(s)|^2}  \\
  &&=\frac{1}{2 (1 + 
   A_\theta) \left|
  e^{2 i s p_1} + A_\theta e^{2 i s p_2}\right|^2}
  \left(e^{-2 t^2 \sigma^2} \left[\sin(2 s p_1) + 
   A_\theta \sin(2 s p_2)\right]\right. \nonumber\\
 &&  \left.\phantom{\frac{1}{1}}\times\left[\sin(2 (s - t) p_1) + 
   A_\theta (A_\theta \sin(2 (s - t) p_2) - 
      \sin(2 t p_1 - 2 s p_2) + 
      \sin(2 s p_1 - 2 t p_2))\right]\right)\nonumber
\end{eqnarray}
is different from 0.

As can be seen in Fig.\ref{fig:cgdcl} \textbf{a)}, for the considered choice of parameters the dynamics is NM at instants different from $s=0.29$. This allows for the existence of scenarios where the possible classicality of the statistics is unrelated to the absence of coherences. As a matter of fact, at the specific instants $s=0.21$ and $s=0.79$, where QRT is not satisfied, one finds that $K_+(t,s)=0$ and $N(t,s)\neq 0$, implying that 2-CL holds together with CGD (Fig.\ref{fig:cgdcl} \textbf{b)}).
By investigation (not reported here) of the model at hand in a wide region of parameters, we also observe that there does not seem to be a threshold in the amount of CGD, above which the violation of 2CL is guaranteed.

\end{document}